\def\section{\@startsection {section}{1}{\z@}{-3.5ex plus -1ex minus -.2ex}{2.3 ex plus .2ex}{\large\bf}}
\def\subsection{\@startsection {subsection}{1}{\z@}{-3.5ex plus -1ex minus -.2ex}{2.3 ex plus .2ex}{\large\it}}
\theoremstyle{definition}
\newtheorem{theorem}{Theorem}
\newtheorem{assumption}{Assumption}
\newtheorem{remark}{Remark}
\newcommand{\argmin}{\mathop{\rm argmin}}
\def\T{{\rm T}}
\def\E{{\rm E}}
\def\oP{{\rm o}_{\rm P}}
\def\OP{{\rm O}_{\rm P}}
\def\vech{{\rm vech}}
\renewcommand{\baselinestretch}{1.8}\selectfont
\begin{document}

\title{\vspace{-15mm} \bf \Large Covariate balancing estimation and model selection for difference-in-differences approach}
\author{Takamichi Baba\\
\small Department of Statistical Science, The Graduate University for Advanced Studies\\ \small Biostatistics Center, Shionogi \& Co. Ltd. \bigskip \\
Yoshiyuki Ninomiya\\
\small Department of Statistical Science, The Graduate University for Advanced Studies\\ \small Department of Fundamental Statistical Mathematics, The Institute of Statistical Mathematics}
\date{}

\maketitle

\begin{abstract}
Remarkable progress has been made in difference-in-differences (\MakeUppercase{did}) approaches to causal inference that estimate the average effect of a treatment on the treated (\MakeUppercase{att}). Of these, the semiparametric \MakeUppercase{did} (\MakeUppercase{sdid}) approach incorporates a propensity score analysis into the \MakeUppercase{did} setup. Supposing that the \MakeUppercase{att} is a function of covariates, we estimate it by weighting the inverse of the propensity score. In this study, as one way to make the estimation robust to the propensity score modeling, we incorporate covariate balancing. Then, by attentively constructing the moment conditions used in the covariate balancing, we show that the proposed estimator is doubly robust. In addition to the estimation, we also address model selection. In practice, covariate selection is an essential task in statistical analysis, but even in the basic setting of the \MakeUppercase{sdid} approach, there are no reasonable information criteria. Here, we derive a model selection criterion as an asymptotically bias-corrected estimator of risk based on the loss function used in the \MakeUppercase{sdid} estimation. We show that a penalty term can be derived that is considerably different from almost twice the number of parameters that often appears in \MakeUppercase{aic}-type information criteria.

\medskip

\noindent\textbf{Keywords}: Average effect of treatment on the treated, Doubly robustness, Information criterion, Moment condition, Propensity score analysis,  Semiparametric difference-in-differences, Statistical asymptotic theory, Weighted mean squared risk
\end{abstract}


\section{Introduction}
\label{sec0}
The difference-in-differences (\MakeUppercase{did}) approach is one of the most common methods used in fields such as economics and epidemiology to assess the effect of a particular intervention or treatment. In the basic setting, there are two groups: a treatment group receiving a specific treatment and a control group receiving no treatment or other treatment. The pre- and post-treatment outcomes of each group are measured and the change in the outcome for the control group is subtracted from the change in the outcome for the treatment group to estimate the average treatment effect on the treated (\MakeUppercase{att}). Under the parallel trend assumption, an estimation in the \MakeUppercase{did} approach is valid. Recently, \cite{abadie2005} dealt with inference under a more natural assumption that involves using information from the covariates, specifically a parallel trend assumption conditioned on the covariates. They further proposed a semiparametric \MakeUppercase{did} (\MakeUppercase{sdid}) approach that does not necessarily require modeling of the covariates to the outcome and that estimates the \MakeUppercase{att} by weighting the inverse of the propensity score.

With or without conditionality, the parallel trend assumption is key to the \MakeUppercase{did} approach. This assumption is one where the mean changes over time in the potential outcomes of the treatment group and the control group would have been similar if there had been no treatment. On the other hand, especially in epidemiology, the ignorable treatment assignment assumption has usually been invoked in causal inference (e.g. \citealt{robins94}).
This is the assumption that the potential outcome variable and the treatment assignment variable, which represents the received treatment, become independent conditioned on the covariates. Although neither assumption is verifiable, to evaluate the validity of the parallel trend assumption, visual inspection is useful when observations over many time points are available. Moreover, the parallel trend assumption naturally holds if it can be supposed that there is no change over time in the expectation of outcomes without treatment. Needless to say, the parallel trend assumption is considerably weaker than the ignorable treatment assignment assumption, and estimating the \MakeUppercase{att} under such a desirable assumption is the advantage that the \MakeUppercase{did} approach gains in exchange for collecting pre- and post-treatment outcomes.

The methodological development of the \MakeUppercase{did} approach has been growing because of the above-mentioned advantages and the increasing amount of available observational study data (see \citealt{roth23} for a comprehensive review of recent studies on the \MakeUppercase{did} approach). However, for example, \cite{abadie2005} assumes that the model for the propensity score is correctly specified. However, the estimator will have a bias if the model is misspecified.
As an extension of \cite{abadie2005}, \cite{sant2020} assumes not only a regression model for the treatment assignment variable used to obtain the propensity score, but also a regression model for the outcome variable, and proposes an estimator for the \MakeUppercase{att} derived from the two models. Their estimator is consistent if either the treatment assignment model or the outcome regression model is correctly specified, that is, the estimator is doubly robust.

In this article, as with \cite{abadie2005}, we suppose that the \MakeUppercase{att} is a function of covariates and address the semiparametric estimation problem for that function.
From the estimated function, the heterogeneity of the \MakeUppercase{att}, that is, the conditional \MakeUppercase{att} can be evaluated.
Two regression models as above are considered, and in order to give double robustness to the estimation, we incorporate covariate balancing (e.g. \citealt{egel08}, \citealt{ImaR14} and \citealt{FanILLNY22}), a method using propensity scores such that a certain moment condition is satisfied. This method actually estimates the treatment assignment model, but only supposes an outcome model and does not estimate it. Nevertheless, the \MakeUppercase{sdid} estimator derived by weighting the inverse of the propensity score, which is obtained by the covariate balancing, will be consistent if either of the regression models is correctly specified. One point to note is that, in order for the \MakeUppercase{sdid} approach to have this favorable asymptotic property, the moment condition has to be constructed in a different way from the usual  one, which is due to considering the conditional \MakeUppercase{att}. Also, the incorporation of covariate balancing leads to an estimation via weights that always balances the moments of the covariates across groups, which is also preferable from an interpretive point of view (\citealt{ImaR14}). We call the proposed method covariate balancing for \MakeUppercase{did} (\MakeUppercase{cbd}).

Obviously, model selection is also an essential task in using the \MakeUppercase{did} approach in a series of data-analysis tasks.
Specifically, by selecting the function of covariates for the \MakeUppercase{att} addressed in this paper, it becomes possible to evaluate the heterogeneity of the \MakeUppercase{att}. There have been various important studies conducted in this area.
For example, in a study estimating the effect of union membership on wages, \cite{houng16} treated the workers' demographics with a linear model for causal effects and explored factors affecting the causal effect on the basis of the values of the regression coefficients and its $p$-value. In other words, they addressed the issue of covariate selection. However, there is no satisfactory information criterion, a basic tool for model selection, even for the basic setting of \MakeUppercase{sdid}. To remedy this situation, we will derive an information criterion for a basic \MakeUppercase{sdid} approach and the \MakeUppercase{cbd} method that is an asymptotically unbiased estimator of the risk based on the loss function used in the estimation. Traditional information criteria such as the generalized information criterion (\MakeUppercase{gic}) (\citealt{KonK96}) cannot be applied because the loss function includes weights based on the propensity score, which are random variables. Since the \MakeUppercase{sdid} approach places importance on being based on the conditional parallel trend assumption, it does not rely on the ignorable treatment assignment assumption when deriving the information criterion, which causes some difficulties. As a result, we will derive a penalty term that is considerably different from twice the number of parameters, which often appears in \MakeUppercase{aic}-type information criteria including the \MakeUppercase{gic}.

In summary, the aim of this paper is to develop a new doubly robust estimation method with the advantage of covariate balancing in the \MakeUppercase{sdid} approach and to derive a model selection criterion for the \MakeUppercase{sdid} approach, for which no reasonable model selection criteria exist yet. Its organization is as follows. Section \ref{sec1} describes the model and the estimation target and introduces the \MakeUppercase{sdid} approach that was proposed in \cite{abadie2005}. In Section \ref{double}, we propose the \MakeUppercase{cbd} method and give a theorem showing its double robustness. Numerical experiments indicate that the estimation is robust against model misspecification. In Section \ref{sec3}, we derive a model selection criterion for when the propensity score is known as a bias-corrected estimator of the risk that is naturally defined from the loss function used in the \MakeUppercase{sdid} estimation. Then, we also prove a theorem that suggests that the penalty term of the model selection criterion is asymptotically valid. As a comparison, we derive a model selection criterion that naturally extends the idea of \MakeUppercase{qic}$_{\text{\MakeUppercase{w}}}$ in \cite{PlaBCWS13} to the \MakeUppercase{sdid} approach. Furthermore, through numerical experiments, we verify the approximation accuracy of the penalty term of the proposed criterion from the perspective of risk estimation and show its superior performance to the \MakeUppercase{qic}$_{\text{\MakeUppercase{w}}}$ from the viewpoint of risk minimization. Section \ref{sec5} summarizes the results for the \MakeUppercase{cbd} method in the same way as Section \ref{sec3}. In other words, a model selection criterion is derived from the naturally defined risk, a theorem supports the validity of its penalty term, and its performance is evaluated through numerical experiments. Section \ref{sec6} reports the results of the estimation and model selection for the real data in \cite{Lal86}. Finally, Section \ref{sec7} discusses future developments of the proposed approach.

\section{Preparation}
\label{sec1}
\subsection{Model and Assumptions}
\label{sec1_1}
Suppose that there are two groups, a treatment group and a control group, and that outcomes are collected before and after the possible treatment time point, respectively. We consider the counterfactual setting, where $y^{[1]}(t)$ denotes the outcome variable of the treatment group, $y^{[0]}(t)$ the outcome variable of the control group, $t=0$ before treatment and $t=1$ after treatment. In practice, only one of the outcomes of the treatment group or control group can be obtained. By preparing treatment assignment variables such that $d^{[1]}=1\ (d^{[0]}=0)$ if the sample is assigned to the treatment group and $d^{[0]}=1\ (d^{[1]}=0)$ if the sample is assigned to the control group, the observed outcome variable collected at each time point is expressed as
\begin{align*}
y(t) = d^{[1]}y^{[1]}(t)+d^{[0]}y^{[0]}(t).
\end{align*}
Denoting the change in the outcome variables at two time points in the treatment group as $\Delta^{[1]}=y^{[1]}(1)-y^{[1]}(0)$ and the change in the control group as $\Delta^{[0]}=y^{[0]}(1)-y^{[0]}(0)$, the observed change is expressed as
\begin{align*}
\Delta = d^{[1]}\Delta^{[1]}+d^{[0]}\Delta^{[0]}.
\end{align*}
Here, for the \MakeUppercase{att} conditioned on the covariate $x\ (\in\mathbb{R}^p)$,
\begin{align}
a(x) \equiv \E[y^{[1]}(1)-y^{[0]}(1)\mid x,d^{[1]}=1],
\label{ATTx}
\end{align}
we suppose $g(x;\theta)$ $(\theta \in \theta \subset \mathbb{R}^p)$ as a working model to approximate it, and for simplicity, set the working model to be a linear one $x^{\T}\theta$. Then, we estimate $\theta$ and give the \MakeUppercase{att} by taking the expectation with respect to $x$ in the treatment group.

Let us define the optimal value of $\theta$ as
\begin{align}
\theta^* \equiv \argmin_{\theta}\E[\{\E[y^{[1]}(1)-y^{[0]}(1)\mid x,d^{[1]}=1]-x^{\T}\theta\}^2\mid d^{[1]}=1].
\label{true}
\end{align}
\cite{abadie2005} shows that $\theta^*$ can be consistently estimated with the \MakeUppercase{sdid} approach by using propensity scores $e^{[1]}(x)\equiv {\rm pr}(d^{[1]}=1\mid x)$ and $e^{[0]}(x)\equiv{\rm pr}(d^{[0]}=1\mid x)$ under several assumptions, and therefore the \MakeUppercase{att} can be consistently estimated. The assumptions are as follows.

\begin{assumption}[Conditional parallel trend]\label{hypo1}
It holds that
\begin{align*}
\E[y^{[0]}(1)-y^{[0]}(0)\mid x,d^{[1]}=1]=\E[y^{[0]}(1)-y^{[0]}(0)\mid x,d^{[0]}=1], 
\end{align*}
which means that, under the expectation conditioned on $x$, the change in outcomes without treatment is the same regardless of whether it is the treatment group or the control group.
\end{assumption}

\begin{assumption}[No-anticipation]\label{hypo2}
It holds that
\begin{align*}
\E[y^{[1]}(0)-y^{[0]}(0)\mid x,d^{[1]}=1]=0, 
\end{align*}
which means that, on average, there is no treatment effect prior to its implementation.
\end{assumption}

\begin{assumption}[Strong positivity]\label{hypo3}
For some $\varepsilon > 0$, it holds that
\begin{align*}
\varepsilon<{\rm pr}(d^{[1]}=1\mid x)<1-\varepsilon.
\end{align*}
\end{assumption}

\noindent
\cite{abadie2005} showed that, by letting $d=(d^{[0]},d^{[1]})$ and using weights $\rho(d,x)\equiv d^{[1]}/e^{[1]}(x)-d^{[0]}/e^{[0]}(x)$,
\begin{align}
& \E[\rho(d,x)\Delta\mid x] = \E[\Delta^{[1]}\mid x,d^{[1]}=1]-\E[\Delta^{[0]}\mid x,d^{[0]}=1] \notag \\
& = \E[\Delta^{[1]}-\Delta^{[0]}\mid x,d^{[1]}=1]
 = \E[y^{[1]}(1)-y^{[0]}(1)\mid x,d^{[1]}=1]
 = a(x)
\label{abaprop}
\end{align}
holds. Assumption \ref{hypo1} is used in the second equality and Assumption \ref{hypo2} is used in the third equality. This equation means that, by multiplying the observed values of change with weights based on propensity scores, the conditional \MakeUppercase{att} can be estimated via the \MakeUppercase{did} approach. Then, from \eqref{true} and \eqref{abaprop}, the optimal solution of $\theta$ can be expressed as
\begin{align}
\theta^* = \argmin_{\theta}\E[e^{[1]}(x)\{\rho(d,x)\Delta-x^{\T}\theta\}^2].
\label{abat0}
\end{align}
Moreover, from this equation,
\begin{align}
& \E[e^{[1]}(x)x\{\rho(d,x)\Delta-x^{\T}\theta^*\}] \notag \\
& = \E[e^{[1]}(x)x\{\E[\rho(d,x)\Delta\mid x]-x^{\T}\theta^*\}]
 = \E[e^{[1]}(x)x\{a(x)-x^{\T}\theta^*\}]
 = 0
\label{cons}
\end{align}
is satisfied, and therefore, 
\begin{align}
\E[e^{[1]}(x)xa(x)] = \E[e^{[1]}(x)xx^{\T}]\theta^*
\label{theta}
\end{align}
is obtained.

\subsection{Semiparametric estimation}
\label{sec1_2}
Here, for an independently and identically distributed set of samples of size $n$, the $i$-th sample will be denoted by the subscript $i$. \cite{abadie2005} proposed
\begin{align}
\hat{\theta} = \bigg\{\sum_{i=1}^{n}e^{[1]}(x_i)x_{i}x_{i}^{\T}\bigg\}^{-1}\sum_{i=1}^{n}e^{[1]}(x_i)x_{i}\rho(d_i,x_i)\Delta_i
\label{theta_est}
\end{align}
as a natural estimator of $\theta^*$ by solving \eqref{abat0}. When the observed values of change are multiplied by weights based on propensity scores, a term corresponding to the difference-in-differences appears; thus, this method is called a semiparametric \MakeUppercase{did} (\MakeUppercase{sdid}) estimator. The following regularity conditions are assumed.
\begin{assumption}\label{hypo4}
$(\mathrm{i})$ $\theta^*$ is in the interior of the compact set $\theta \subset \mathbb{R}^r$.
$(\mathrm{ii})$ $\E[y(t)^2]<\infty$ for all $t$. 
$(\mathrm{iii})$ $\E[x]$ is bounded and $\E[e^{[1]}(x)xx^{\T}]$ is non-singular.
\end{assumption}
\noindent
From Assumption \ref{hypo4} and \eqref{theta}, we obtain
\begin{align*}
& \hat{\theta} = \bigg\{\frac{1}{n}\sum_{i=1}^{n}e^{[1]}(x_i)x_{i}x_{i}^{\T}\bigg\}^{-1}\frac{1}{n}\sum_{i=1}^{n}e^{[1]}(x_i)x_{i}\rho(d_i,x_i)\Delta_i \\
& \stackrel{p}{\to} \E[e^{[1]}(x)xx^{\T}]^{-1} \E[e^{[1]}(x)x\rho(d,x)\Delta] \\ 
& = \E[e^{[1]}(x)xx^{\T}]^{-1} \E[e^{[1]}(x)xa(x)]
 = \E[e^{[1]}(x)xx^{\T}]^{-1} \E[e^{[1]}(x)xx^{\T}]\theta^*
 = \theta^*,
\end{align*}
and therefore $\hat{\theta}$ is a consistent estimator of $\theta^*$. In practice, the propensity scores are often unknown. In such cases, the propensity scores can be modeled as $e^{[1]}(x;\alpha)={\rm logit}(x^{\T}\alpha)$; for example, the parameter $\alpha$ can be estimated as $\hat{\alpha}$ by maximum likelihood estimation (\MakeUppercase{mle}), and the propensity score in \eqref{theta_est} can be replaced by $e^{[1]}(x;\hat{\alpha})$ to obtain the \MakeUppercase{sdid} estimator. If the propensity score model is correctly specified, i.e., it can be written as
\begin{align*}
{\rm pr}(d^{[1]}=1\mid x)={\rm logit}(x^{\T}\alpha^*),
\end{align*}
and if $\hat{\alpha}$ converges to $\alpha^*$ under some further assumptions, then $\hat{\theta}$ is still a consistent estimator of $\theta^*$.

\section{Covariate balancing for difference-in-differences approach}
\label{double}
\subsection{Estimation method and its statistical properties}
\label{theory}
Propensity scores are often estimated using \MakeUppercase{mle}, but if the model for the propensity scores is misspecified, the estimator of the \MakeUppercase{att} will be biased. Here, we propose a robust method that can deal with model misspecification. While a doubly robust estimation known as augmented inverse-probability-weighted estimation, which uses estimators from an outcome regression model, has been addressed in \cite{sant2020} and \cite{NinPT20}, here we construct a doubly robust estimation by incorporating covariate balancing (e.g. \citealt{egel08}, \citealt{ImaR14} and \citealt{FanILLNY22}). Let ${\rm O}$ be a zero matrix and let us define
\begin{align*}
& H^{[1]}(d,x;\alpha)\equiv e^{[1]}(x;\alpha)\bigg\{\frac{d^{[1]}}{e^{[1]}(x;\alpha)}-1\bigg\}xx^{\T},
\\
& H^{[0]}(d,x;\alpha)\equiv e^{[1]}(x;\alpha)\bigg\{\frac{d^{[0]}}{e^{[0]}(x;\alpha)}-1\bigg\}xx^{\T},
\end{align*}
and consider the moment conditions, $\E[H^{[1]}(d,x;\alpha)]= {\rm O}$ and $\E[H^{[0]}(d,x;\alpha)] = {\rm O}$.
From these conditions is obtained $\E[d^{[1]}xx^{\T}]=\E[\{e^{[1]}(x;\alpha)/e^{[0]}(x;\alpha)\}d^{[0]}xx^{\T}]$, which means that the expectation of $xx^{\T}$ in the treatment group is adjusted to match the expectation of $xx^{\T}$ in the control group through weighting based on the propensity scores. In the original covariate balancing, the first-order moments of the covariates are balanced; that is, $xx^{\T}$ in the expectation is naturally set to $x$. It should be noted, however, that the converse does not necessarily hold. In this paper, the second-order moments of the covariates are balanced.

The semi-vectorization of the upper triangular part of the matrix is written as $\vech(\cdot)$. Since $xx^{\T}$ is a symmetric matrix, by letting
\begin{align*}
h(d,x;\alpha) \equiv (\vech\{H^{[1]}(d,x;\alpha)\}^{\T},\vech\{H^{[0]}(d,x;\alpha)\}^{\T})^{\T},
\end{align*}
the moment conditions can be rewritten as
\begin{align}
\E[h(d,x;\alpha)]=0.
\label{exmoment}
\end{align}
We will denote an $\alpha$ satisfying $\eqref{exmoment}$ as $\alpha^{\dagger}$. Since the propensity score model may be misspecified, $\alpha^{\dagger}$ is not necessarily the true value $\alpha^{*}$. In practice, the \MakeUppercase{sdid} approach uses a $\hat{\alpha}$ satisfying the empirical version of the moment conditions,
\begin{align}
\frac{1}{n}\sum_{i=1}^{n}h(d_i,x_i;\hat{\alpha})=0.
\label{moment}
\end{align}
If the moment conditions are just-identified, that is, the dimension of the moment conditions (the number of constraints) is the same as the dimension of the parameter $\alpha$, then $\alpha$ that satisfies $\eqref{moment}$ exists. On the other hand, if they are over-identified, that is, the dimension of the moment conditions is larger than the dimension of $\alpha$, then an $\hat{\alpha}$ that satisfies \eqref{moment} does not necessarily exist. In that case, an $\hat{\alpha}$ satisfying
\begin{align*}
\frac{1}{n}\sum_{i=1}^{n}h(d_i,x_i;\hat{\alpha})\approx0
\end{align*}
is used. Specifically, $\hat{\alpha}$ is obtained from the generalized method of moments (\citealt{hansen1982}) or the generalized empirical likelihood method (\citealt{qin94}).

Here, we will use the generalized method of moments and propose $\hat{\alpha}^{\rm \MakeUppercase{cb}}\equiv\argmin_{\alpha}\{h_n(\alpha)^{\T}\allowbreak W_{n}\allowbreak h_n(\alpha)\}$ as the estimator of $\alpha$ by using the covariate balancing (\MakeUppercase{cb}) method, where $h_n(\alpha)\equiv n^{-1}\sum_{i=1}^{n}h(d_i,x_i;\alpha)$ and $W_{n}$ is an arbitrary semi-positive definite matrix. Upon defining $G_n(\alpha) \equiv \partial h_n(\alpha)/\partial \alpha^{\T}$, one sees that the estimator $\hat{\alpha}^{\rm \MakeUppercase{cb}}$ satisfies
\begin{align}
G_n(\hat{\alpha}^{\rm \MakeUppercase{cb}})^{\T}W_{n}h_n(\hat{\alpha}^{\rm \MakeUppercase{cb}})=0.
\label{gmmeq}
\end{align}
Its simplest candidate is the identity matrix, but the optimal matrix, $\{n^{-1}\sum_{i=1}^{n}h(d_i,x_i;\alpha)\allowbreak h(d_i,\allowbreak x_i;\alpha)^{\T}\}^{-1}$, provides the most efficient estimator of $\alpha$. However, this optimal matrix is not necessarily practical since it may be too close to being a degenerate matrix for some data, causing considerable instability in the estimation of $\alpha$ and consequently $\theta$.

\begin{remark}
The idea of using constraints that approximately satisfy the empirical version of the moment conditions to estimate the parameters of the propensity scores has also been examined. In \cite{NinPI20} a doubly robust estimation method is proposed for dealing with the case of high-dimensional covariates. This method combines a semiparametric estimation of the \MakeUppercase{att} with sparse estimation. It is assumed that the outcome can be expressed by a linear model of covariates, and to speed up the convergence of the Horvitz-Thompson estimator, the weak covariate balancing property
is included as a constraint in estimating the parameter of the propensity scores. 
\end{remark}

Here, we state the regularity conditions of the generalized method of moments (\citealt{newey94}) that are necessary to derive the properties of the proposed estimation method.

\begin{assumption}[Consistency]\label{hypo5}
$(\mathrm{i})$ $W_{n}$ converges in probability to $W$, which is a semi-positive definite matrix.  Also, $\alpha=\alpha^{\dagger}$ if and only if $W\E[h(d,x;\alpha)]=0$. $(\mathrm{ii})$ $\alpha^{\dagger}$ is in the interior of the compact set $A \subset \mathbb{R}^p$. $(\mathrm{iii})$ $h(d,x;\alpha)$ is continuous; that is, $e^{[1]}(x;\alpha)$ is continuous at each $\alpha\in A$ with probability one. $(\mathrm{iv})$ $\E[\sup_{\alpha\in A}\|h(d,x;\alpha)\|]<\infty$. 
\end{assumption}

\begin{assumption}[Asymptotic normality]\label{hypo6}
$(\mathrm{i})$ $h(d,x;\alpha)$ is continuously differentiable in a neighborhood $\Gamma$ of $\alpha^{\dagger}$ with probability approaching one. $(\mathrm{ii})$ $\E[h(d,x;\alpha^{\dagger})]=0$ and $\E[\|h(d,x;\alpha^{\dagger})\|^2]\allowbreak<\infty$.  $(\mathrm{iii})$ $\E[\sup_{\alpha\in\Gamma}\allowbreak\|\partial h(d,x;\alpha)/\partial\alpha^{\T}\|_{\rm F}]<\infty$.  $(\mathrm{iv})$ $G(\alpha^{\dagger})^{\T}W\allowbreak G(\alpha^{\dagger})$ is non-singular for $G(\alpha) \equiv \E[\partial h(d,x;\alpha)/\partial\alpha^{\T}]$.
\end{assumption}

\noindent
Under these regularity conditions, the proposed method can be shown to be doubly robust, as follows. Here, we name the estimator obtained by replacing the propensity scores with $e^{[1]}(x;\hat{\alpha}^{\rm \MakeUppercase{cb}})$ in \eqref{theta_est} the covariate balancing for \MakeUppercase{did} (\MakeUppercase{cbd}) estimator and denote it as $\hat{\theta}^{\rm \MakeUppercase{cbd}}$.

\begin{theorem}
If the propensity score model $e^{[1]}(x;\alpha)$ is correctly specified, or if the change in outcomes follows a model in covariates of the form
\begin{align} 
\E[\Delta^{[k]}\mid x,d^{[k]}=1]=x^{\T}\beta^{[k]*}+\kappa(x), 
\label{linear} 
\end{align} 
where $\kappa(x)$ is an unspecified function, then under Assumptions \ref{hypo1}-\ref{hypo6}, $\hat{\theta}^{\rm \MakeUppercase{cbd}}$ converges in probability to $\theta^*$. That is, the proposed estimator has double robustness.
\end{theorem}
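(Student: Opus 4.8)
The plan is to first pin down the probability limit of $\hat{\theta}^{\rm \MakeUppercase{cbd}}$ and then verify that this limit equals $\theta^*$ under each of the two hypotheses separately. Since $\hat{\theta}^{\rm \MakeUppercase{cbd}}$ is the sample analogue of a ratio of expectations, I would begin by invoking the generalized method of moments theory of \cite{newey94} under Assumptions \ref{hypo5}--\ref{hypo6}, which gives $\hat{\alpha}^{\rm \MakeUppercase{cb}}\stackrel{p}{\to}\alpha^{\dagger}$, where $\alpha^{\dagger}$ is the (by Assumption \ref{hypo5}(i) unique) solution of the population moment condition \eqref{exmoment}. Combining this with the weak law of large numbers, the continuity of $e^{[1]}(x;\alpha)$, the dominance conditions, and the nonsingularity in Assumption \ref{hypo4}(iii) through Slutsky's theorem and the continuous mapping theorem, I would obtain
\begin{align*}
\hat{\theta}^{\rm \MakeUppercase{cbd}}\stackrel{p}{\to}\E[e^{[1]}(x;\alpha^{\dagger})xx^{\T}]^{-1}\,\E[e^{[1]}(x;\alpha^{\dagger})x\,\rho(d,x;\alpha^{\dagger})\Delta].
\end{align*}
The theorem then reduces to the single population identity $\E[e^{[1]}(x;\alpha^{\dagger})x\,\rho(d,x;\alpha^{\dagger})\Delta]=\E[e^{[1]}(x;\alpha^{\dagger})xx^{\T}]\,\theta^*$, which I would establish under each scenario.

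For the first scenario I would argue that correct specification forces $\alpha^{\dagger}=\alpha^{*}$. Indeed, substituting $e^{[1]}(x;\alpha^{*})={\rm pr}(d^{[1]}=1\mid x)$ into $H^{[1]}$ and $H^{[0]}$ and taking the conditional expectation given $x$ makes both $\E[H^{[1]}(d,x;\alpha^{*})]$ and $\E[H^{[0]}(d,x;\alpha^{*})]$ vanish, so $\alpha^{*}$ solves \eqref{exmoment} and coincides with $\alpha^{\dagger}$ by the uniqueness in Assumption \ref{hypo5}(i). With the true propensity scores thereby restored, $\E[\rho(d,x;\alpha^{\dagger})\Delta\mid x]=a(x)$ by \eqref{abaprop}, and the required identity is exactly \eqref{theta}.

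The second scenario is the crux. Writing the weight as $e^{[1]}(x;\alpha^{\dagger})\rho(d,x;\alpha^{\dagger})=d^{[1]}-\{e^{[1]}(x;\alpha^{\dagger})/e^{[0]}(x;\alpha^{\dagger})\}d^{[0]}$ and taking the conditional expectation given $x$ with $\E[d^{[k]}\mid x]=e^{[k]}(x)$ (the true scores), the numerator becomes $\E[e^{[1]}(x)x\,\E[\Delta^{[1]}\mid x,d^{[1]}=1]]-\E[\{e^{[1]}(x;\alpha^{\dagger})e^{[0]}(x)/e^{[0]}(x;\alpha^{\dagger})\}x\,\E[\Delta^{[0]}\mid x,d^{[0]}=1]]$. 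Substituting the outcome model \eqref{linear}, the linear parts produce $\E[e^{[1]}(x)xx^{\T}]\beta^{[1]*}$ and $\E[\{e^{[1]}(x;\alpha^{\dagger})e^{[0]}(x)/e^{[0]}(x;\alpha^{\dagger})\}xx^{\T}]\beta^{[0]*}$. This is exactly where the second-order balancing acts: taking the conditional expectation of $\E[H^{[1]}]=\E[H^{[0]}]={\rm O}$ yields $\E[e^{[1]}(x)xx^{\T}]=\E[e^{[1]}(x;\alpha^{\dagger})xx^{\T}]=\E[\{e^{[1]}(x;\alpha^{\dagger})e^{[0]}(x)/e^{[0]}(x;\alpha^{\dagger})\}xx^{\T}]$, so both linear contributions collapse onto the common matrix $\E[e^{[1]}(x;\alpha^{\dagger})xx^{\T}]$ and combine into $\E[e^{[1]}(x;\alpha^{\dagger})xx^{\T}](\beta^{[1]*}-\beta^{[0]*})$. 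Since the common nuisance $\kappa(x)$ cancels between groups in the conditional \MakeUppercase{att}, $a(x)=x^{\T}(\beta^{[1]*}-\beta^{[0]*})$ is exactly linear, whence $\theta^{*}=\beta^{[1]*}-\beta^{[0]*}$ matches the collapsed linear part.

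The step I expect to be the main obstacle is the handling of the \emph{residual} contribution of the unspecified common term $\kappa(x)$. Unlike the linear parts, this term enters the numerator with the scalar weight $e^{[1]}(x)-e^{[1]}(x;\alpha^{\dagger})e^{[0]}(x)/e^{[0]}(x;\alpha^{\dagger})$ multiplying $x$ rather than $xx^{\T}$, so it is not immediately annihilated by the matrix balancing supplied by $\E[H^{[1]}]=\E[H^{[0]}]={\rm O}$. I would therefore have to track these weights carefully, use the logistic tie $e^{[0]}(x;\alpha)=1-e^{[1]}(x;\alpha)$ to simplify the net weight on $\kappa(x)$, and determine precisely what it is about the construction of the moment conditions that forces its integral against $x$ to vanish. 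Settling this cancellation—as opposed to the immediate cancellation of $\kappa(x)$ at the level of $a(x)$—is the delicate core of the double-robustness claim, and it is also what pins down why the balancing must be imposed on the second-order moments $xx^{\T}$ rather than on the first-order moments $x$ used in ordinary covariate balancing, since the estimating equation for $\theta$ already carries one factor of $x$ and the linear outcome contributes another.
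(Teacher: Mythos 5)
Your overall route is the same as the paper's: use Assumptions \ref{hypo5}--\ref{hypo6} to get $\hat{\alpha}^{\rm \MakeUppercase{cb}}\stackrel{p}{\to}\alpha^{\dagger}$, reduce the claim to the population identity $\E[e^{[1]}(x;\alpha^{\dagger})x\{\rho(d,x;\alpha^{\dagger})\Delta-x^{\T}\theta^*\}]=0$, handle the correctly-specified case via $\alpha^{\dagger}=\alpha^{*}$ and \eqref{cons} (your verification that $\alpha^{*}$ solves \eqref{exmoment} is actually more explicit than the paper, which simply asserts it), and in the outcome-model case condition on $x$, substitute \eqref{linear}, identify $\theta^{*}=\beta^{[1]*}-\beta^{[0]*}$, and let the two second-order balancing conditions collapse the linear contributions onto a common matrix. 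All of this matches the paper's argument step for step.

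The genuine gap is the one you flag and then leave open: the residual contribution of $\kappa(x)$. After substitution it equals
\begin{align*}
\E[e^{[1]}(x;\alpha^{\dagger})\rho(d,x;\alpha^{\dagger})\kappa(x)x]
=\E\bigg[\bigg\{e^{[1]}(x)-\frac{e^{[1]}(x;\alpha^{\dagger})}{e^{[0]}(x;\alpha^{\dagger})}\,e^{[0]}(x)\bigg\}\kappa(x)\,x\bigg],
\end{align*}
whereas \eqref{exmoment} only yields $\E[\{e^{[1]}(x)-e^{[1]}(x;\alpha^{\dagger})e^{[0]}(x)/e^{[0]}(x;\alpha^{\dagger})\}xx^{\T}]={\rm O}$. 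There is no further ``logistic tie'' to exploit: the residual vanishes precisely when each entry of $\kappa(x)x$ is a linear combination of entries of $xx^{\T}$ --- for instance when $\kappa(x)=x^{\T}\gamma$, in which case \eqref{linear} collapses to a purely linear model and the proof closes immediately --- but for a genuinely nonlinear, unspecified $\kappa$ it need not vanish under the stated moment conditions. So your proposal is incomplete as written; you cannot ``settle this cancellation'' without either restricting $\kappa$ to be linear (absorbing it into $\beta^{[k]*}$) or augmenting the balancing conditions with moments of the form $\E[e^{[1]}(x;\alpha)\{d^{[k]}/e^{[k]}(x;\alpha)-1\}\kappa(x)x]=0$. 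It is worth noting that you have put your finger on exactly the point the paper's own proof elides: its final display passes from the substituted outcome model directly to $\E[H^{[1]}\beta^{[1]*}-H^{[0]}\beta^{[0]*}]=0$ as if $\kappa\equiv 0$, so the paper's argument, too, only covers the case in which $\kappa$ contributes nothing beyond a linear term.
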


\begin{proof}
From Assumption \ref{hypo5}, $\hat{\alpha}^{\rm \MakeUppercase{cb}}$ is a consistent estimator of $\alpha^{\dagger}$. By expanding $h_n(\hat{\alpha})$ around $\alpha^{\dagger}$ and applying Assumption \ref{hypo6} and \eqref{gmmeq}, it is shown to be
\begin{align}
\sqrt{\vphantom{x}}{n}(\hat{\alpha}^{\rm \MakeUppercase{cb}}-\alpha^{\dagger})
& = -\{G_n(\hat{\alpha}^{\rm \MakeUppercase{cb}})^{\T}W_{n}G_n(\hat{\alpha}^{\rm \MakeUppercase{cb}})\}^{-1}G_n(\hat{\alpha}^{\rm \MakeUppercase{cb}})^{\T}W_{n}\sqrt{\vphantom{x}}{n}h_n(\alpha^{\dagger})+\oP(1)
 \label{adiff} \\
& \stackrel{\rm d}{\to} {\rm N}(0,\{G(\alpha^{\dagger})^{\T}WG(\alpha^{\dagger})\}^{-1}G(\alpha^{\dagger})^{\T}W\Omega(\alpha^{\dagger})WG(\alpha^{\dagger})\{G(\alpha^{\dagger})^{\T}WG(\alpha^{\dagger})\}^{-1}), \notag
\end{align}
where $\Omega(\alpha)\equiv \E[h(d,x;\alpha)h(d,x;\alpha)^{\T}]$. From the consistency of $\hat{\alpha}^{\rm \MakeUppercase{cb}}$, we also obtain
\begin{align}
&\hat{\theta}^{\rm \MakeUppercase{cbd}}-\theta^* \notag \\
&= \bigg\{\frac{1}{n}\sum_{i=1}^ne^{[1]}(x_i;\hat{\alpha}^{\rm \MakeUppercase{cb}})x_ix_i^{\T}\bigg\}^{-1} \frac{1}{n}\sum_{i=1}^ne^{[1]}(x_i;\hat{\alpha}^{\rm \MakeUppercase{cb}})x_i\{\rho(d_i,x_i;\hat{\alpha}^{\rm \MakeUppercase{cb}})\Delta_i-x_i^{\T}\theta^*\} \label{expand} \\
&= \bigg\{\frac{1}{n}\sum_{i=1}^ne^{[1]}(x_i;\alpha^{\dagger})x_ix_i^{\T}\bigg\}^{-1} \frac{1}{n}\sum_{i=1}^ne^{[1]}(x_i;\alpha^{\dagger})x_i\{\rho(d_i,x_i;\alpha^{\dagger})\Delta_i-x_i^{\T}\theta^*\} + \oP(1). \label{expand2}
\end{align}
If the propensity score model is correctly specified and $\alpha^{\dagger}=\alpha^*$ is satisfied, that is, if  $e^{[1]}(x)=e^{[1]}(x;\alpha^{\dagger})$ and $e^{[0]}(x)=e^{[0]}(x;\alpha^{\dagger})$ hold, then from \eqref{cons}, $\hat{\theta}^{\rm \MakeUppercase{cbd}}$ is a consistent estimator of $\theta^*$. On the other hand, even if the propensity score model is misspecified, when the change in outcomes follows a linear model of covariates, that is, when \eqref{linear} holds, we obtain
\begin{align*}
a(x) = \E[\Delta^{[1]}\mid x,d^{[1]}=1]-\E[\Delta^{[0]}\mid x,d^{[0]}=1] = x^{\T}(\beta^{[1]*}-\beta^{[0]*})
\end{align*}
from \eqref{abaprop}. Then, it follows from \eqref{theta} that 
\begin{align*}
\theta^* = \E[e^{[1]}(x)xx^{\T}]^{-1} \E[e^{[1]}(x)xa(x)] = \beta^{[1]*}-\beta^{[0]*}.
\end{align*}
Therefore, we obtain
\begin{align*}
& \E[e^{[1]}(x;\alpha^{\dagger})x\{\rho(d,x;\alpha^{\dagger})\Delta-x^{\T}\theta^*\}] \\
& = \E\bigg[e^{[1]}(x_i;\alpha^{\dagger})x_i\bigg\{\frac{d^{[1]}_{i}}{e^{[1]}(x_i;\alpha^{\dagger})}\E[\Delta_{i}^{[1]}\mid x_i,d^{[1]}_{i}=1] \\
& \phantom{= \E\bigg[e^{[1]}(x_i;\alpha^{\dagger})x_i\bigg\{}
-\frac{d^{[0]}_{i}}{e^{[0]}(x_i;\alpha^{\dagger})}\E[\Delta_{i}^{[0]}\mid x_i,d^{[0]}_{i}=1]-x_i^{\T}\theta^*\bigg\}\bigg] \\
& = \E\bigg[e^{[1]}(x_i;\alpha^{\dagger})x_i\bigg[\bigg\{\frac{d^{[1]}_{i}}{e^{[1]}(x_i;\alpha^{\dagger})}-1\bigg\}x_i^{\T}\beta^{[1]*}-\bigg\{\frac{d^{[0]}_{i}}{e^{[0]}(x_i;\alpha^{\dagger})}-1\bigg\}x_i^{\T}\beta^{[0]*}\bigg]\bigg] \\
& = \E[h_1(d_i,x_i;\alpha^{\dagger})\beta^{[1]*}-h_0(d_i,x_i;\alpha^{\dagger})\beta^{[0]*}]
 = 0.
\end{align*}
Here, the last equality holds from the moment conditions in \eqref{exmoment}. From the above, when the propensity score model is correctly specified, or when the change in outcomes follows a linear model of covariates, it holds that
\begin{align*}
\sum_{i=1}^ne^{[1]}(x_i;\alpha^{\dagger})x_i\{\rho(d_i,x_i;\alpha^{\dagger})\Delta_i-x_i^{\T}\theta^*\}=\OP{(\sqrt{\vphantom{x}}{n})}
\end{align*}
in \eqref{expand2}, and therefore $\hat{\theta}^{\rm \MakeUppercase{cbd}}$ is consistent.
\end{proof}

\begin{remark}
Let us discuss the relationship between our proposed method and the method in \cite{sant2020}.
Unlike our setting, which estimates the conditional \MakeUppercase{att}, the latter directly estimates the \MakeUppercase{att} using a doubly robust approach.
In their approach as well, the propensity scores are estimated so that the first-order moments of the distribution for the covariates are balanced.
They also suppose a parametric model of the covariates for the change in the control group's outcomes and estimate it, using these estimators to estimate the \MakeUppercase{att}.
In contrast, we place importance on evaluating the heterogeneity of the \MakeUppercase{att}, including variable selection, and then set a function of the covariates for the \MakeUppercase{att}.
We suppose a parametric model of the covariates for the change in outcomes as well, although not estimating it.
Then, if the propensity scores are estimated so that not the first-order but the second-order moments of the distribution for the covariates are balanced, it is easily, yet somewhat unexpectedly, shown that the conditional \MakeUppercase{att} estimator is doubly robust.
\end{remark}

\begin{remark}
After the initial posting of our preprint, a related manuscript by \cite{LiM25} became publicly available.
That work studies the semiparametric local efficiency of estimating the unconditional \MakeUppercase{att} via covariate balancing in the \MakeUppercase{sdid} approach, building on \cite{FanILLNY22}. 
The emphasis there is primarily theoretical and provides valuable insights into efficiency considerations.
In contrast, Theorem 1 above highlights a methodological contribution: it shows that doubly robust estimation of the conditional \MakeUppercase{att} necessitates balancing second-order moments of the covariate distribution.
A detailed investigation of efficiency properties is beyond the scope of the present paper.
While the efficiency in the case where \eqref{linear} holds can be derived following \cite{FanILLNY22}, efficiency analysis when \eqref{linear} does not hold would require more advanced semiparametric tools, such as those developed in \cite{GraP22}.
\end{remark}

\subsection{Numerical experiments}
\label{robust}
Let us examine the robustness of the proposed estimation against model misspecification through simulation studies. The setting for data generation follows
\begin{align*}
& x_1, x_2 \sim{\rm Uniform}(0, 2), \quad d^{[1]}\sim{\rm Bernoulli}({\rm logit}(-x_1+\alpha^* x_2)),
\\
& y^{[0]}(0)=y^{[1]}(0)\sim{\rm N}(0,1), \quad \epsilon^{[0]}, \epsilon^{[1]} \sim{\rm N}(0,1),
\\
& y^{[0]}(1)=y^{[0]}(0)+\epsilon^{[0]}, \quad y^{[1]}(1)=y^{[1]}(0)+\beta^* x_1+\epsilon^{[1]}.
\end{align*}
Throughout this paper, we will suppose that all random numbers are generated independently. The parameter $\beta^*$ represents the magnitude of the contribution from the covariate $x_1$ to the causal effect. The propensity score is estimated using only $x_1$; that is, the parameter $\alpha^*$ represents the magnitude of the contribution from $x_2$ to the true propensity scores, which is not included in the model for estimating the propensity scores. When $\alpha^* \neq 0$, $x_2$ should also be included in the model, and we are doing a model misspecification because it was not included. Anyway, since \eqref{linear} holds, the proposed estimation is theoretically guaranteed to be consistent. In this setting, we will compare the proposed method with a naive method estimating the propensity score by the \MakeUppercase{mle} and check whether the proposed method is robust to misspecification of the propensity score model. Each of the numerical experiments described below was repeated 3,000 times.

The true values of the \MakeUppercase{att}, the mean estimates, and their empirical confidence intervals are listed in Table \ref{tab4} for several $\alpha^*$, $\beta^*$ and sample sizes $n$. With increasing $\alpha^*$, which represents the magnitude of the misspecification of the propensity score model, the bias of the naive method becomes larger, and the true value is eventually no longer included in the confidence interval. Regarding the proposed method, the approximation accuracy is considerably high when the identity matrix is used as the weighting matrix in the generalized method of moments. On the other hand, when the optimal matrix is used as the weighting matrix, although the accuracy is often comparable to that of using the identity matrix, in some cases, it is similar to that of the naive method. Since there are settings in simulation studies where the optimal matrix is close to being degenerate, those cases in which the accuracy deteriorates to that of the naïve method may be attributed to the fact that the estimation of $\theta$ is not stable. Regardless of whether \eqref{linear} holds, if the propensity score model is estimated correctly, both the proposed method and the naive method can estimate the \MakeUppercase{att} without bias.

\begin{table}
\renewcommand{\baselinestretch}{1.5}\selectfont
\caption{Comparison of estimates of the \MakeUppercase{att} when the propensity scores are given by the \MakeUppercase{cbd} method and the \MakeUppercase{mle}.}
\begin{center}
\begin{tabular}{rrrrrrrrrr}
 & & & & \multicolumn{2}{c}{\MakeUppercase{cbd}-id} & \multicolumn{2}{c}{\MakeUppercase{cbd}-opt} & \multicolumn{2}{c}{\MakeUppercase{mle}} \\
 \multicolumn{1}{c}{$\beta^*$} & \multicolumn{1}{c}{$\alpha^*$} & \multicolumn{1}{c}{$n$} & \multicolumn{1}{c}{True} & \multicolumn{1}{c}{Est} & \multicolumn{1}{c}{Conf Int} & \multicolumn{1}{c}{Est} & \multicolumn{1}{c}{Conf Int} & \multicolumn{1}{c}{Est} & \multicolumn{1}{c}{Conf Int} \\ \addlinespace[1ex]
 & & 200 & 0.09 & 0.09 & [$-$0.12, 0.30] & 0.17 & [$-$0.20, 0.51] & $-$0.00 & [$-$0.22, 0.21] \\
            & 1.0 & 400 & 0.08 & 0.09 & [$-$0.07, 0.24] & 0.17 & [$-$0.14, 0.45]  & $-$0.01 & [$-$0.17, 0.14] \\
           \multirow{2}{*}{0.1} & & 600 & 0.08 & 0.08 & [$-$0.03, 0.21] & 0.19 & [$-$0.13, 0.12] & $-$0.01 & [$-$0.13, 0.12] \\ \addlinespace[1ex]
            & & 200 & 0.09 & 0.11 & [$-$0.15, 0.39] & 0.12 & [$-$0.14, 0.37] & $-$0.31 & [$-$0.80, 0.10] \\
            & 3.0 & 400 & 0.09 & 0.11 & [$-$0.07, 0.29] & 0.13 & [$-$0.06, 0.31] & $-$0.30 & [$-$0.60, $-$0.03] \\
            & & 600 & 0.10 & 0.11 & [$-$0.03, 0.26] & 0.12 & [$-$0.02, 0.27] & $-$0.30 & [$-$0.55, $-$0.08] \\ 
\addlinespace[1ex]
 & & 200 & 0.42 & 0.43 & [0.21, 0.65] & 0.52 & [0.09, 0.93] & 0.31 & [0.06, 0.57] \\
            & 1.0 & 400 & 0.43 & 0.43 & [0.27, 0.59] & 0.53 & [0.17, 0.85] & 0.31 & [0.15, 0.49] \\
           \multirow{2}{*}{0.5} & & 600 & 0.43 & 0.43 & [0.31, 0.56] & 0.55 & [0.20, 0.80] & 0.31 & [0.18, 0.45] \\ \addlinespace[1ex]
            & & 200 & 0.47 & 0.50 & [0.24, 0.78] & 0.51 & [0.25, 0.77] & 0.05 & [$-$0.45, 0.47] \\
            & 3.0 & 400 & 0.47 & 0.50 & [0.32, 0.69] & 0.52 & [0.33, 0.70] & 0.06 & [$-$0.32, 0.69] \\ 
            & & 600 & 0.48 & 0.50 & [0.35, 0.66] & 0.51 & [0.37, 0.66] & 0.05 & [$-$0.20, 0.28] \\
\addlinespace[1ex]
 & & 200 & 0.85 & 0.85 & [0.62, 1.10] & 0.97 & [0.47, 1.44] & 0.71 & [0.44, 1.01] \\
            & 1.0 & 400 & 0.85 & 0.85 & [0.68, 1.03] & 0.98 & [0.53, 1.35] & 0.70 & [0.52, 0.91] \\
           \multirow{2}{*}{1.0} & & 600 & 0.85 & 0.85 & [0.72, 0.99] & 1.00 & [0.57, 1.31] & 0.71 & [0.55, 0.87] \\ \addlinespace[1ex]
            & & 200 & 0.95 & 0.99 & [0.71, 1.28] & 1.00 & [0.73, 1.27] & 0.50 & [$-$0.02, 0.93] \\
            & 3.0 & 400 & 0.95 & 0.99 & [0.80, 1.18] & 1.00 & [0.81, 1.20] & 0.50 & [0.18, 0.80] \\
            & & 600 & 0.95 & 0.99 & [0.83, 1.15] & 1.00 & [0.85, 1.16] & 0.50 & [0.24, 0.74] \\
\addlinespace[1ex]
 & & 200 & 2.56 & 2.55 & [2.13, 2.96] & 2.77 & [1.84, 3.49] & 2.30 & [1.82, 2.82] \\
            & 1.0 & 400 & 2.56 & 2.55 & [2.27, 2.86] & 2.78 & [1.96, 2.67] & 2.30 & [1.95, 2.67] \\
           \multirow{2}{*}{3.0} & & 600 & 2.56 & 2.55 & [2.33, 2.76] & 2.82 & [2.03, 3.32] & 2.30 & [2.02, 2.58] \\ \addlinespace[1ex]
            & & 200 & 2.85 & 2.93 & [2.55, 3.33] & 2.95 & [2.59, 3.32] & 2.28 & [1.70, 2.85] \\
            & 3.0 & 400 & 2.85 & 2.92 & [2.67, 3.21] & 2.95 & [2.70, 3.21] & 2.28 & [1.89, 2.67] \\
            & & 600 & 2.85 & 2.92 & [2.71, 3.15] & 2.94 & [2.75, 3.16] & 2.27 & [1.97, 2.59] \\
\end{tabular}
\end{center}
True, true value of the \MakeUppercase{att}; \MakeUppercase{cbd}-id, \MakeUppercase{cbd} method when the weighting matrix in the generalized method of moments is the identity matrix; \MakeUppercase{cbd}-opt, \MakeUppercase{cbd} method when it is the optimal matrix; Est, average of the estimates obtained by each simulation; Conf Int, empirical 95\% confidence interval.
\label{tab4}
\end{table}

\section{Model selection criterion for semiparametric difference-in-differences when propensity scores are known}
\label{sec3}
\subsection{Risk function and its asymptotic evaluation in semiparametric difference-in-differences}
\label{sec4_1}
As mentioned in Section \ref{sec0}, there are no reasonable model selection criteria even for the basic setting treated in {\cite{abadie2005}. To remedy this situation, in this section, we will first derive a model selection criterion for when the propensity scores are known and then develop a model selection criterion for when the proposed \MakeUppercase{cbd} method is used. Specifically, by using the ideas of \cite{BabKN17} and \cite{BabNino}, we will derive the model selection criteria as asymptotically unbiased estimators of the natural risk corresponding to the loss function used to obtain \MakeUppercase{sdid} estimators. Using these criteria, the dimension of $\theta$ is determined and the component of $x$ is selected. A natural choice is to define a weighted risk based on \eqref{abat0} as
\begin{align}
& \sum_{i=1}^{n}\E[e^{[1]}(x_i;\alpha)\{\E[y^{[1]}_{i}(1)-y^{[0]}_{i}(1)\mid x_i,d^{[1]}_{i}=1]-x_{i}^{\T}\hat{\theta}\}^2] \notag \\
& = \sum_{i=1}^{n}\E[e^{[1]}(x_i;\alpha)\{\rho(d_i,x_i;\alpha)\Delta_i-x_{i}^{\T}\hat{\theta}\}^2] \notag \\
& \ \phantom{=} -\sum_{i=1}^{n}\E[e^{[1]}(x_i;\alpha)\{\rho(d_i,x_i;\alpha)\Delta_i-\E[y^{[1]}_{i}(1)-y^{[0]}_{i}(1)\mid x_i,d^{[1]}_{i}=1]\}^2] \notag
\\
& \ \phantom{=} +2\sum_{i=1}^{n}\E[e^{[1]}(x_i;\alpha)\{\rho(d_i,x_i;\alpha)\Delta_i-\E[y^{[1]}_{i}(1)-y^{[0]}_{i}(1)\mid x_i,d^{[1]}_{i}=1]\} \notag \\
& \ \phantom{=+2\sum_{i=1}^{n}\E[} \times \{x_{i}^{\T}\hat{\theta}-\E[y^{[1]}_{i}(1)-y^{[0]}_{i}(1)\mid x_i,d^{[1]}_{i}=1]\}].
\label{risk}
\end{align}
Following the derivation of the conventional \MakeUppercase{c}$_{\text{\MakeUppercase{p}}}$ criterion, we will remove the expectation in the first term, ignore the second term which is independent of the models, and asymptotically evaluate the third term. For simplicity, we will refer to the third term as a bias when the risk is simply estimated. If we define $\varepsilon_i \equiv \rho(d_i,x_i;\alpha)\Delta_i-a(x_i)$ using $a(x_i)$ in \eqref{ATTx}, the third term can be written as
\begin{align}
& 2\sum_{i=1}^n\E[e^{[1]}(x_i;\alpha)\{\rho(d_i,x_i;\alpha)\Delta_i-a(x_i)\}\{x_{i}^{\T}\hat{\theta}-a(x_i)\}] \notag \\
& = 2\sum_{i=1}^n\E[\varepsilon_ie^{[1]}(x_i;\alpha)\{x_{i}^{\T}\hat{\theta}-a(x_i)\}] 
 = 2\sum_{i=1}^n\E[\varepsilon_ie^{[1]}(x_i;\alpha)x_{i}^{\T}(\hat{\theta}-\theta^*)].
\label{pena}
\end{align}
The second equality is satisfied by $\E[\varepsilon_i e^{[1]}(x_i;\alpha) a(x_i)]=0$ and $\E[\varepsilon_i e^{[1]}(x_i;\alpha) \allowbreak x^{\T}_{i}\theta^{*}]=0$ since it follows from \eqref{abaprop} that $\E[\varepsilon_i\mid x_i]=0$. Here, we have used the same $e^{[1]}(x_i;\alpha)$ as that was used to construct $\hat{\theta}$. Then, letting $c^{\rm limit}$ be the weak limit of the third term with the expectation removed, we consider $\E[c^{\rm limit}]$ as the asymptotic evaluation of the third term of \eqref{risk}. While we consider a risk based on the loss function in deriving the \MakeUppercase{sdid} estimator, it is also possible to consider a simple squared error.

\subsection{Derivation of model selection criterion}
\label{sec4_2}
In the case of known propensity scores, we will denote $e^{[1]}(x_i;\alpha^*)$ as $e^{[1]}(x_i)$ and $\rho(d_i,x_i;\alpha^*)$ as $\rho(d_i,x_i)$ for simplicity. The main term of the error of the \MakeUppercase{sdid} estimator can be expressed as
\begin{align*}
& \hat{\theta}-\theta^* = \E[e^{[1]}(x)xx^{\T}]^{-1}\frac{1}{n}\sum_{i=1}^{n}e^{[1]}(x_i)x_{i}\{\rho(d_i,x_i)\Delta_i-x_i^{\T}\theta^*\}\{1+\oP(1)\}
\\
& = \E[e^{[1]}(x)xx^{\T}]^{-1}\frac{1}{n}\sum_{i=1}^{n}e^{[1]}(x_i)x_{i}\{\varepsilon_i+a(x_i)-x_i^{\T}\theta^*\}\{1+\oP(1)\}.
\end{align*}
Accordingly, \eqref{pena} is evaluated as
\begin{align*}
\E[c^{\rm limit}] = 2\sum_{i=1}^{n}\E\bigg[\varepsilon_i e^{[1]}(x_i)x_{i}^{\T}\E[e^{[1]}(x)xx^{\T}]^{-1}\frac{1}{n}\sum_{j=1}^{n}e^{[1]}(x_j)x_{j}\{\varepsilon_j+a(x_j)-x_j^{\T}\theta^*\}\bigg].
\end{align*}
This expectation for the case of $i\neq j$ is the product of the expectations for $i$ and $j$, from the independence among the samples. The expectation for $i$ can be calculated as 
\begin{align}
\E[\varepsilon_ie^{[1]}(x_i)x_{i}^{\T}] = \E_{x_{i}}[\E[\varepsilon_i\mid x_{i}]e^{[1]}(x_i)x_{i}^{\T}] = 0.
\label{zeroterm}
\end{align}
Therefore, we only have to consider the case of $i=j$. Since $a(x_j)-x_j^{\T}\theta^*$ contains only $x_j$ as a random variable, this term disappears from \eqref{zeroterm}, and the following theorem is obtained.

\begin{theorem}\label{theorem2}
When the propensity scores are known, it holds that
\begin{align*}
& \E[c^{\rm limit}] =2\E[\varepsilon e^{[1]}(x)x^{\T}\E[e^{[1]}(x)xx^{\T}]^{-1}e^{[1]}(x)x\varepsilon] \\
& = 2{\rm tr}\{\E[e^{[1]}(x)xx^{\T}]^{-1}\E[\sigma^2(x)e^{[1]}(x)^2xx^{\T}]\},
\end{align*}
where
\begin{align}
\sigma^2(x) \equiv \E[\varepsilon^2\mid x]
& = \E[\{\rho(d,x)\Delta\}^2\mid x] - \E[\rho(d,x)\Delta\mid x]^2. \label{std}
\end{align}
\end{theorem}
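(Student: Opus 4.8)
The plan is to start from the displayed expression for $\E[c^{\rm limit}]$ that immediately precedes the statement and to evaluate the double sum over $i$ and $j$ by separating the off-diagonal part ($i\neq j$) from the diagonal part ($i=j$).

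First, for the off-diagonal terms I would use the independence of the samples so that the expectation factorizes into a product over $i$ and $j$. The factor attached to index $i$ is $\E[\varepsilon_i e^{[1]}(x_i)x_i^{\T}]$, which vanishes by \eqref{zeroterm}; this is simply the tower property combined with $\E[\varepsilon_i\mid x_i]=0$, the latter following from \eqref{abaprop}. Consequently every $i\neq j$ term drops out and only the $n$ diagonal terms survive.

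Next, for the diagonal part I would decompose the inner bracket as $\varepsilon_i+\{a(x_i)-x_i^{\T}\theta^*\}$ and handle the two pieces separately. Apart from the leading $\varepsilon_i$, the piece $a(x_i)-x_i^{\T}\theta^*$ is a function of $x_i$ alone, so conditioning on $x_i$ and again invoking $\E[\varepsilon_i\mid x_i]=0$ shows that its contribution is zero; only the term quadratic in $\varepsilon_i$ remains. Since the samples are identically distributed, the $n$ surviving summands are equal and the prefactor $1/n$ cancels the count $n$, leaving
\[
\E[c^{\rm limit}] = 2\E[\varepsilon e^{[1]}(x)x^{\T}\E[e^{[1]}(x)xx^{\T}]^{-1}e^{[1]}(x)x\varepsilon],
\]
which is the first asserted equality.

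Finally, to pass to the trace form I would observe that the integrand is a scalar and hence equal to its own trace, then apply the cyclic property to bring $\E[e^{[1]}(x)xx^{\T}]^{-1}$ to the front, obtaining $2\,{\rm tr}\{\E[e^{[1]}(x)xx^{\T}]^{-1}\E[\varepsilon^2 e^{[1]}(x)^2 xx^{\T}]\}$. One more use of the tower property replaces $\E[\varepsilon^2 e^{[1]}(x)^2 xx^{\T}]$ by $\E[\sigma^2(x)e^{[1]}(x)^2 xx^{\T}]$ with $\sigma^2(x)=\E[\varepsilon^2\mid x]$, and expanding $\varepsilon=\rho(d,x)\Delta-a(x)$ together with $\E[\rho(d,x)\Delta\mid x]=a(x)$ from \eqref{abaprop} gives the form of $\sigma^2(x)$ in \eqref{std}. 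The argument is essentially bookkeeping; the only point requiring care is the repeated conditioning on $x_i$, where at each step one must verify that the quantity multiplying $\varepsilon_i$ is measurable with respect to $x_i$ so that $\E[\varepsilon_i\mid x_i]=0$ may legitimately be applied.
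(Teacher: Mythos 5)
Your proposal is correct and follows essentially the same route as the paper: splitting the double sum, using independence together with $\E[\varepsilon_i\mid x_i]=0$ to kill the off-diagonal terms, noting that $a(x_i)-x_i^{\T}\theta^*$ is $x_i$-measurable so that only the part quadratic in $\varepsilon_i$ survives on the diagonal, and finishing with the trace/cyclic and tower-property manipulations. No gaps.
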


\noindent
Although one may think that an empirical evaluation of $\E[c^{\rm limit}]$ is easy, we must consider how to deal with the term $\E[\E[\rho(d,x)\Delta\mid x]^2e^{[1]}(x)^2 xx^{\T}]$. This cannot be empirically evaluated as $n^{-1}\sum_{i=1}^{n}\rho(d,x)^2\allowbreak\Delta^2e^{[1]}(x)^2 xx^{\T}$. Here, we will substitute $x_i^{\T}\hat{\theta}$ into $\E[\rho(d_i,x_i)\Delta\mid x_i]$. As a result, when the propensity scores are known, we propose
\begin{align}
&\sum_{i=1}^{n}e^{[1]}(x_i)\{\rho(d_i,x_i)\Delta_i-x_{i}^{\T}\hat{\theta}\}^2 \notag \\
&+2{\rm tr}\biggl(\biggl\{\sum_{i=1}^ne^{[1]}(x_i)x_ix_i^{\T}\biggr\}^{-1}\biggl[\sum_{i=1}^n\{\rho(d_i,x_i)^2\Delta^2-(x_i^{\T}\hat{\theta})^2\}e^{[1]}(x_i)^2x_ix_i^{\T}\biggr]\biggr)
\label{IC_k}
\end{align}
as the model selection criterion for the basic \MakeUppercase{sdid} approach.

Through simulation studies, we demonstrate that the second term in \eqref{IC_k}, which serves as the penalty term of the proposed criterion, accurately approximates the bias in \eqref{pena} that we originally wanted to evaluate.
In contrast, the penalty term of an intuitive model selection criterion obtained by extending \MakeUppercase{qic}$_{\text{\MakeUppercase{w}}}$ (\citealt{PlaBCWS13}) to the present setting substantially underestimates this bias.
We further compare the two criteria using an empirical estimate of the risk as the primary performance measure. The results show that the proposed criterion outperforms \MakeUppercase{qic}$_{\text{\MakeUppercase{w}}}$ in almost all settings.
Details on these numerical experiments, including the extension of \MakeUppercase{qic}$_{\text{\MakeUppercase{w}}}$, are provided in the Supplementary Material.
 
\section{Model selection criterion for covariate balancing difference-in-differences}
\label{sec5}

\subsection{Derivation of model selection criterion}
\label{sec5_1}

As we did for the case of known propensity scores in Section \ref{sec4_2}, here, we will derive a validated model selection criterion for when the \MakeUppercase{cbd} method is used. Let us evaluate
\begin{align*}
\sum_{i=1}^n 2\E[e^{[1]}(x_i;\hat{\alpha}^{\rm \MakeUppercase{cb}})\{\rho(d_i,x_i;\hat{\alpha}^{\rm \MakeUppercase{cb}})\Delta_i-\E[y^{[1]}_{i}-y^{[0]}_{i} \mid x_i, d^{[1]}_{i}=1]\}x_{i}^{\T}(\hat{\theta}^{\rm \MakeUppercase{cbd}}-\theta^*)],
\end{align*}
which is the third term of the risk in \eqref{pena}. Substituting $x_i^{\T}\theta^*$, which is a working model for conditional \MakeUppercase{att}, into $\E[y^{[1]}_{i}-y^{[0]}_{i} \mid x_i, d^{[1]}_{i}=1]$ first and then substituting the right-hand side of \eqref{expand} into $\hat{\theta}^{\rm \MakeUppercase{cbd}}-\theta^*$ yields
\begin{align*}
& 2\E\bigg[\frac{1}{\sqrt{\vphantom{x}}{n}}\sum_{i=1}^n e^{[1]}(x_{i};\hat{\alpha}^{\rm \MakeUppercase{cb}})\{\rho(d_i,x_i;\hat{\alpha}^{\rm \MakeUppercase{cb}})\Delta_i-x_i^{\T}\theta^*\}x_i^{\T} \bigg\{\frac{1}{n}\sum_{j=1}^ne^{[1]}(x_j;\hat{\alpha}^{\rm \MakeUppercase{cb}})x_jx_j^{\T}\bigg\}^{-1} 
\\
& \ \phantom{\E\bigg[} \times\frac{1}{\sqrt{\vphantom{x}}{n}}\sum_{j=1}^n e^{[1]}(x_j;\hat{\alpha}^{\rm \MakeUppercase{cb}})\{\rho(d_j,x_j;\hat{\alpha}^{\rm \MakeUppercase{cb}})\Delta_j-x_j^{\T}\theta^*\}x_j+\oP(1)\bigg].
\end{align*}
Since $n^{-1}\sum_{j=1}^ne^{[1]}(x_j;\hat{\alpha}^{\rm \MakeUppercase{cb}})x_jx_j^{\T}$ converges in probability to
\begin{align}
L(\alpha^{\dagger})\equiv\E[e^{[1]}(x;\alpha^{\dagger})xx^{\T}],
\label{defL}
\end{align}
by letting $u$ be the Gaussian random vector that is the weak limit of
\begin{align}
\frac{1}{\sqrt{\vphantom{x}}{n}}\sum_{i=1}^ne^{[1]}(x_i;\hat{\alpha}^{\rm \MakeUppercase{cb}})\{\rho(d_i,x_i;\hat{\alpha}^{\rm \MakeUppercase{cb}})\Delta_i-x_i^{\T}\theta^*\}x_i,
\label{weak}
\end{align}
we can express the weak limit of \eqref{pena} with expectation removed as
\begin{align*}
c^{\rm limit} = 2u^{\T}L(\alpha^{\dagger})^{-1}u = 2{\rm tr}\{L(\alpha^{\dagger})^{-1}uu^{\T}\}.
\end{align*}
When $\hat{\theta}^{\rm \MakeUppercase{cbd}}$ is a consistent estimator of $\theta^*$, it holds that ${\rm E}[uu^{\T}]={\rm var}[u]$ since the expectation of $u$ becomes $0$. Accordingly, the asymptotic evaluation of \eqref{pena}, which is to be used in the information criterion, can be written as
\begin{align*}
\E[c^{\rm limit}] =  2{\rm tr}\{L(\alpha^{\dagger})^{-1}{\rm var}[u]\}.
\end{align*}

Now let us derive a specific expression for ${\rm var}[u]$. Since the error of the estimator $\hat{\alpha}^{\rm \MakeUppercase{cb}}$ in the propensity scores is written as \eqref{adiff}, we can expand \eqref{weak} as
\begin{align*}
& \frac{1}{\sqrt{\vphantom{x}}{n}}\sum_{i=1}^ne^{[1]}(x_i;\hat{\alpha}^{\rm \MakeUppercase{cb}})\{\rho(d_i,x_i;\hat{\alpha}^{\rm \MakeUppercase{cb}})\Delta_i-x_i^{\T}\theta^*\}x_i \\
& = \frac{1}{\sqrt{\vphantom{x}}{n}}\sum_{i=1}^{n}x_i\bigg\{e^{[1]}(x_i;\alpha^{\dagger})\{\rho(d_i,x_i;\alpha^{\dagger})\Delta_i-x_i^{\T}\theta^*\} \\
& \ \phantom{= \frac{1}{\sqrt{\vphantom{x}}{n}}\sum_{i=1}^{n}x_i\bigg\{} + \dfrac{\partial}{\partial \alpha^{\T}}\bigg(e^{[1]}(x_i;\alpha^{\dagger})\bigg[\frac{d^{[1]}_{i}-e^{[1]}(x_i;\alpha^{\dagger})}{e^{[1]}(x_i;\alpha^{\dagger})\{1-e^{[1]}(x_i;\alpha^{\dagger})\}}\Delta_i-x_i^{\T}\theta^*\bigg]\bigg)(\hat{\alpha}^{\rm \MakeUppercase{cb}}-\alpha^{\dagger}) \bigg\}
\\
& \ \phantom{= \frac{1}{\sqrt{\vphantom{x}}{n}}\sum_{i=1}^{n}x_i} \times\{1+\oP(1)\} \\
& = \frac{1}{\sqrt{\vphantom{x}}{n}}\sum_{i=1}^{n}[e^{[1]}(x_i;\alpha^{\dagger})\{\rho(d_i,x_i;\alpha^{\dagger})\Delta_i-x_i^{\T}\theta^*\}x_{i} \\
& \ \phantom{= \frac{1}{\sqrt{\vphantom{x}}{n}}\sum_{i=1}^{n}[} - M(\alpha^{\dagger},\theta^{*})\{G(\alpha^{\dagger})^{\T}WG(\alpha^{\dagger})\}^{-1}G(\alpha^{\dagger})^{\T}Wh(d_i,x_i;\alpha^{\dagger})]\{1+\oP(1)\},
\end{align*}
where
\begin{align*}
M(\alpha,\theta)=\E\biggl[\dfrac{\partial}{\partial \alpha}e^{[1]}(x;\alpha)\biggl[\frac{d^{[1]} -1}{\{1-e^{[1]}(x;\alpha)\}^2}\Delta - x^{\T}\theta\biggr]x^{\T}\biggr].
\end{align*}
Then, by defining 
\begin{align}
V(\alpha^{\dagger},\theta^*) \equiv {\rm var} & [e^{[1]}(x;\alpha^{\dagger})\{\rho(d,x;\alpha^{\dagger})\Delta-x^{\T}\theta^*\}x \notag \\
& \; -M(\alpha^{\dagger},\theta^{*})\{G(\alpha^{\dagger})^{\T}WG(\alpha^{\dagger})\}^{-1}G(\alpha^{\dagger})^{\T}Wh(d,x;\alpha^{\dagger})],
\label{vari}
\end{align}
it can be shown that ${\rm var}[u]=V(\alpha^{\dagger},\theta^*)$. Thus, the following theorem is obtained.

\begin{theorem}\label{theorem3}
When the propensity scores are estimated using the \MakeUppercase{cbd} method, the third term in \eqref{risk} is evaluated as
\begin{align*}
\E[c^{\rm limit}] =  2{\rm tr}\{L(\alpha^{\dagger})^{-1}V(\alpha^{\dagger},\theta^*)\},
\end{align*}
where $L(\cdot)$ and $V(\cdot,\cdot)$ are defined in \eqref{defL} and \eqref{vari}, respectively. 
\end{theorem}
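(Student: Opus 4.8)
The statement is the endpoint of the asymptotic expansion begun just above it, so the plan is to make each displayed step rigorous and then read off the covariance of the limiting Gaussian vector $u$. I would start from the third term of the risk in \eqref{pena} with the estimated weights $e^{[1]}(x_i;\hat{\alpha}^{\rm \MakeUppercase{cb}})$, replace the conditional \MakeUppercase{att} $\E[y^{[1]}_i-y^{[0]}_i\mid x_i,d^{[1]}_i=1]$ by the working value $x_i^{\T}\theta^*$, and substitute the expansion \eqref{expand} for $\hat{\theta}^{\rm \MakeUppercase{cbd}}-\theta^*$. Using Assumption \ref{hypo5} for the consistency of $\hat{\alpha}^{\rm \MakeUppercase{cb}}$ together with the law of large numbers, the sample Gram matrix $n^{-1}\sum_j e^{[1]}(x_j;\hat{\alpha}^{\rm \MakeUppercase{cb}})x_jx_j^{\T}$ converges in probability to $L(\alpha^{\dagger})$, so the whole expression collapses to the quadratic form $2\,{\rm tr}\{L(\alpha^{\dagger})^{-1}uu^{\T}\}$ in the normalized score \eqref{weak}. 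Since Theorem 1 guarantees that $\hat{\theta}^{\rm \MakeUppercase{cbd}}$ is consistent under the stated hypotheses, $u$ has mean zero and hence $\E[uu^{\T}]={\rm var}[u]$, giving $\E[c^{\rm limit}]=2\,{\rm tr}\{L(\alpha^{\dagger})^{-1}{\rm var}[u]\}$. The entire content of the theorem is therefore the identification ${\rm var}[u]=V(\alpha^{\dagger},\theta^*)$.

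The crux is computing ${\rm var}[u]$, and here the fact that $\alpha$ is estimated rather than known matters: \eqref{weak} is not a clean i.i.d.\ average because $\hat{\alpha}^{\rm \MakeUppercase{cb}}$ appears inside each summand. I would Taylor-expand $e^{[1]}(x_i;\hat{\alpha}^{\rm \MakeUppercase{cb}})\{\rho(d_i,x_i;\hat{\alpha}^{\rm \MakeUppercase{cb}})\Delta_i-x_i^{\T}\theta^*\}x_i$ around $\alpha^{\dagger}$, keeping the term linear in $\hat{\alpha}^{\rm \MakeUppercase{cb}}-\alpha^{\dagger}$ whose population average is $M(\alpha^{\dagger},\theta^*)$, and then insert the asymptotically linear representation \eqref{adiff} for $\sqrt{n}(\hat{\alpha}^{\rm \MakeUppercase{cb}}-\alpha^{\dagger})$. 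This yields an influence-function form in which each summand is $e^{[1]}(x_i;\alpha^{\dagger})\{\rho(d_i,x_i;\alpha^{\dagger})\Delta_i-x_i^{\T}\theta^*\}x_i-M(\alpha^{\dagger},\theta^*)\{G(\alpha^{\dagger})^{\T}WG(\alpha^{\dagger})\}^{-1}G(\alpha^{\dagger})^{\T}Wh(d_i,x_i;\alpha^{\dagger})$. These summands are i.i.d.\ and mean zero: the first piece integrates to zero by the double-robustness identity established in the proof of Theorem 1 (it is \eqref{cons} when the propensity-score model is correct, and follows from the moment condition together with \eqref{linear} otherwise), while the second integrates to zero because $\E[h(d,x;\alpha^{\dagger})]=0$ by \eqref{exmoment}. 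A central limit theorem then delivers $u$ as a centered Gaussian with covariance exactly \eqref{vari}, i.e.\ ${\rm var}[u]=V(\alpha^{\dagger},\theta^*)$, and substituting into the trace representation closes the argument.

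The main obstacle I anticipate is the bookkeeping for the linearization rather than any single hard inequality: I must show that the $\oP(1)$ remainders in the Taylor expansion are genuinely negligible after multiplication by the $\OP(1)$ normalized sums, which requires the differentiability, dominated-derivative, and nonsingularity conditions in Assumption \ref{hypo6} to justify replacing $G_n$, $W_n$, and the sample derivatives by their probability limits $G(\alpha^{\dagger})$, $W$, and $M(\alpha^{\dagger},\theta^*)$. A secondary subtlety is that $\E[c^{\rm limit}]$ denotes the expectation of the weak limit, so the passage $\E[uu^{\T}]={\rm var}[u]$ relies on $\E u=0$, which in turn is exactly the consistency of $\hat{\theta}^{\rm \MakeUppercase{cbd}}$ from Theorem 1; were that consistency to fail, $u$ would carry a nonzero mean and the stated trace formula would acquire an extra rank-one term. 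With the centering and the influence-function representation in hand, the final identity $\E[c^{\rm limit}]=2\,{\rm tr}\{L(\alpha^{\dagger})^{-1}V(\alpha^{\dagger},\theta^*)\}$ follows immediately.
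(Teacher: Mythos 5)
Your proposal follows the paper's own derivation essentially step for step: the same substitution of $x_i^{\T}\theta^*$ and of \eqref{expand} into \eqref{pena}, the same reduction to $2\,{\rm tr}\{L(\alpha^{\dagger})^{-1}uu^{\T}\}$ with $\E[uu^{\T}]={\rm var}[u]$ via the consistency of $\hat{\theta}^{\rm \MakeUppercase{cbd}}$, and the same linearization of \eqref{weak} in $\hat{\alpha}^{\rm \MakeUppercase{cb}}-\alpha^{\dagger}$ using \eqref{adiff} to obtain the influence-function form whose variance is \eqref{vari}. The argument is correct, and your added remarks on the mean-zero verification of each piece and the control of the remainders are consistent elaborations of, not departures from, the paper's route.
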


Since the penalty term in this theorem depends on the true values of the parameters, we can use empirical estimates in practice. 
Specifically, letting 
\begin{align}
M_n(\alpha,\theta)\equiv \frac{1}{n}\sum_{i=1}^n\dfrac{\partial}{\partial \alpha}e^{[1]}(x_i;\alpha)\bigg[\frac{d^{[1]}_{i}-1}{\{1-e^{[1]}(x_i;\alpha)\}^2}\Delta_i - x_i^{\T}\theta\bigg]x_i^{\T}
\label{defMn}
\end{align}
and
\begin{align*}
V_i(\alpha,\theta) \equiv \ & e^{[1]}(x_i;\alpha)\{\rho(d_i,x_i;\alpha)\Delta_i-x_i^{\T}\theta\}x_i \\
& - M_n(\alpha,\theta)\{G_n(\alpha)^{\T}W_{n}G_n(\alpha)\}^{-1}G_n(\alpha)^{\T}W_{n}h(d_i,x_i;\alpha)],
\end{align*}
we will replace $L(\alpha^{\dagger})$ by $L_n(\hat{\alpha}^{\rm \MakeUppercase{cb}}) \equiv n^{-1}\sum_{i=1}^n\allowbreak e^{[1]}(x_i;\hat{\alpha}^{\rm \MakeUppercase{cb}})x_ix_i^{\T}$ and $V(\alpha^{\dagger},\theta^*)$ by $V_n(\hat{\alpha}^{\rm \MakeUppercase{cb}},\hat{\theta}^{\rm \MakeUppercase{cbd}}) \equiv n^{-1}\sum_{i=1}^n V_i(\hat{\alpha}^{\rm \MakeUppercase{cb}},\hat{\theta}^{\rm \MakeUppercase{cbd}})V_i(\hat{\alpha}^{\rm \MakeUppercase{cb}},\hat{\theta}^{\rm \MakeUppercase{cbd}})^{\T}$. 
As a result, for when the propensity scores are estimated with the \MakeUppercase{cbd} method, we propose
\begin{align}
\sum_{i=1}^{n}e^{[1]}(x_{i};\hat{\alpha}^{\rm \MakeUppercase{cb}})\{\rho(d_i,x_i;\hat{\alpha}^{\rm \MakeUppercase{cb}})\Delta_i-x_{i}^{\T}\hat{\theta}^{\rm \MakeUppercase{cbd}}\}^2 + 2{\rm tr}\{L_n(\hat{\alpha}^{\rm \MakeUppercase{cb}})^{-1}V_n(\hat{\alpha}^{\rm \MakeUppercase{cb}},\hat{\theta}^{\rm \MakeUppercase{cbd}})\}
\label{CBDIC}
\end{align}
as the model selection criterion, which is an asymptotically unbiased estimator of the risk in \eqref{risk}. Here, the propensity scores are estimated using covariate balancing, but the information criterion can be derived in the same way when the \MakeUppercase{mle} is used (see the Supplementary Material).

\subsection{Numerical experiments}
\label{sec5_2}
The simulation studies presented in this section examine the accuracy of the approximation for the penalty term and the performance of the model selection criterion in \eqref{CBDIC} for the proposed \MakeUppercase{cbd} method. On the basis of the results in Section \ref{robust}, we will use the identity matrix as the weighting matrix in the generalized method of moments and, even when the optimal matrix is used as the weighting matrix, the accuracy of the approximation for the penalty term and the performance of the model selection with comparable to those obtained with the identity matrix (see the Supplementary Material).

In Table \ref{tab5}, the data generation settings include Case 1-1,
\begin{align*}
& x_1 \sim{\rm Uniform}(0, 2), \quad d^{[1]}\sim{\rm Bernoulli}({\rm logit}(-x_1)), \quad y^{[0]}(0)=y^{[1]}(0)\sim{\rm N}(0,1),
\\
& \epsilon^{[0]}, \epsilon^{[1]} \sim{\rm N}(0,1), \quad y^{[0]}(1)=y^{[0]}(0)+\epsilon^{[0]}, \quad y^{[1]}(1)=y^{[1]}(0)+1+\beta^* x_1+\epsilon^{[1]},
\end{align*}
and Case 1-2,
\begin{align*}
& x_1,x_2 \sim{\rm Uniform}(0, 2), \quad d^{[1]}\sim{\rm Bernoulli}({\rm logit}(-x_1+x_2)), \quad y^{[0]}(0)=y^{[1]}(0)\sim{\rm N}(0,1), 
\\
& \epsilon^{[0]}, \epsilon^{[1]} \sim{\rm N}(0,1), \quad y^{[0]}(1)=y^{[0]}(0)+\epsilon^{[0]}, \quad y^{[1]}(1)=y^{[1]}(0)+1+\beta^* (x_1+x_2)+\epsilon^{[1]}.
\end{align*}
In both cases, $\beta^*$ represents the magnitude of contribution of the covariate $x_1$ to the causal effect.
Results are listed to check whether the second term in \eqref{CBDIC}, the penalty term of the proposed criterion, accurately approximates the bias in \eqref{pena}, which we originally wanted to evaluate, for various values of the parameter $\beta^*$ and sample size $n$. In all cases, the evaluated penalty terms of the proposed criterion are much closer to the bias than the evaluated penalty terms of the \MakeUppercase{qic}$_{\text{\MakeUppercase{w}}}$, and its approximation accuracy is high. In contrast, the penalty term of the \MakeUppercase{qic}$_{\text{\MakeUppercase{w}}}$ significantly underestimates the bias in all cases.

\begin{table}[!t]
\renewcommand{\baselinestretch}{1.5}\selectfont
\caption{Bias evaluation for case in which the propensity scores are estimated using the \MakeUppercase{cbd} method with the identity matrix as the weighting matrix.}
\begin{center}
\begin{tabular}{rrrrrrrr}
&  & \multicolumn{3}{c}{Case 1-1} & \multicolumn{3}{c}{Case 1-2} \\
 \multicolumn{1}{c}{$\beta^*$} & \multicolumn{1}{c}{$n$} & \multicolumn{1}{c}{True} & \multicolumn{1}{c}{Proposal} & \multicolumn{1}{c}{\MakeUppercase{qic}$_{\text{\MakeUppercase{w}}}$} & \multicolumn{1}{c}{True} & \multicolumn{1}{c}{Proposal} & \multicolumn{1}{c}{\MakeUppercase{qic}$_{\text{\MakeUppercase{w}}}$} \\ \addlinespace[1ex]
 & 200 & ~7.53 & ~7.35 & 2.23 & 17.61 & 16.98 & ~5.94 \\
 0.1 & 400 & ~7.19 & ~7.31 & 2.24 & 17.88 & 17.55 & ~5.99 \\
 & 600 & ~7.09 & ~7.34 & 2.27 & 17.10 & 17.35 & ~5.98 \\ \addlinespace[1ex]
 & 200 & ~8.21 & ~8.04 & 2.31 & 19.42 & 18.76 & ~6.38 \\
 0.5 & 400 & ~7.79 & ~7.89 & 2.33 & 19.51 & 19.62 & ~6.48 \\
 & 600 & ~7.75 & ~7.88 & 2.34 & 18.94 & 19.25 & ~6.47 \\ \addlinespace[1ex]
 & 200 & ~9.47 & ~9.56 & 2.58 & 23.20 & 22.67 & ~7.92 \\
 1.0 & 400 & ~9.04 & ~9.25 & 2.58 & 23.07 & 23.18 & ~8.00 \\
 & 600 & ~8.96 & ~9.14 & 2.59 & 22.62 & 22.98 & ~8.00 \\ \addlinespace[1ex]
 & 200 & 19.10 & 19.22 & 5.17 & 54.55 & 54.82 & 23.62 \\
 3.0 & 400 & 18.36 & 18.58 & 5.20 & 53.71 & 54.81 & 24.08 \\
 & 600 & 17.88 & 18.49 & 5.22 & 53.55 & 54.64 & 23.90 \\
\end{tabular}
\end{center}
True, the true value of the bias evaluated by the Monte Carlo method; Proposal, the penalty term of the proposed criterion; \MakeUppercase{qic}$_{\text{\MakeUppercase{w}}}$, the penalty term of the \MakeUppercase{qic}$_{\text{\MakeUppercase{w}}}$.
\label{tab5}
\end{table}

In Table \ref{tab6}, data are generated from
\begin{align*}
& x_1,\ldots,x_l\sim{\rm Uniform}(0, 2), \quad x=(1,x_1,\ldots,x_{l})^{\T}, \quad y^{[0]}(0)=y^{[1]}(0)\sim{\rm N}(0,1), 
\\
& \epsilon^{[0]},\epsilon^{[1]} \sim{\rm N}(0,1), \quad y^{[0]}(1)=y^{[0]}(0)+\epsilon^{[0]}, \quad y^{[1]}(1)=y^{[1]}(0)+x^{\T}\beta^{[1]*}+\epsilon^{[1]},
\end{align*}
in Case 2-1,
\begin{align*}
l=4, \quad \beta^{[1]*} = (1,\beta^*,0_{3}^{\T})^{\T}, \quad d^{[1]}\sim{\rm Bernoulli}({\rm logit}(-x_1)),
\end{align*}
Case 2-2,
\begin{align*}
l=4, \quad \beta^{[1]*} = (1,1_2^{\T}\beta^*,0_{2}^{\T})^{\T}, \quad d^{[1]}\sim{\rm Bernoulli}({\rm logit}(-x_1+x_2)),
\end{align*}
and Case 2-3, 
\begin{align*}
l=6, \quad \beta^{[1]*} = (1,1_2^{\T}\beta^*,0_{4}^{\T})^{\T}, \quad d^{[1]}\sim{\rm Bernoulli}({\rm logit}(-x_1+x_2)).
\end{align*}
Results are listed to compare the proposed criterion and the \MakeUppercase{qic}$_{\text{\MakeUppercase{w}}}$ through the forward selection method. The main index of goodness for the model selection criteria is again the empirical estimate of the risk. Unlike Section \ref{sec4_4}, where the propensity scores are known, the proposed criterion always outperforms the \MakeUppercase{qic}$_{\text{\MakeUppercase{w}}}$ in all cases, even in Case 2-2, where relatively few covariates are unnecessary among the candidates. The proposed criterion is especially superior when the candidates include many covariates that do not contribute to the true structure, as in Case 2-1 and Case 2-3 of Section \ref{sec4_4}. The \MakeUppercase{qic}$_{\text{\MakeUppercase{w}}}$ underestimates the penalty term; thus, too many covariates are selected.

\begin{table}
\renewcommand{\baselinestretch}{1.5}\selectfont
\caption{Comparison of the proposed criterion and the \MakeUppercase{qic}$_{\text{\MakeUppercase{w}}}$ through model selection for case in which the propensity scores are estimated using the \MakeUppercase{cbd} method with the identity matrix as the weighting matrix.}
\begin{center}
\begin{tabular}{cccrrrrrrrrr}
& & & \multicolumn{3}{c}{Case 2-1} & \multicolumn{3}{c}{Case 2-2} & \multicolumn{3}{c}{Case 2-3} \\
$\beta^*$ & $n$ & & \multicolumn{1}{c}{Risk} & \multicolumn{1}{c}{\MakeUppercase{tp}} & \multicolumn{1}{c}{\MakeUppercase{fp}} & \multicolumn{1}{c}{Risk} & \multicolumn{1}{c}{\MakeUppercase{tp}} & \multicolumn{1}{c}{\MakeUppercase{fp}} & \multicolumn{1}{c}{Risk} & \multicolumn{1}{c}{\MakeUppercase{tp}} & \multicolumn{1}{c}{\MakeUppercase{fp}} \\ \addlinespace[1ex]
& \multirow{2}{*}{200} & Proposal & 7.33 & 0.21 & 0.53 & 12.38 & 0.41 & 0.33 & 16.49 & 0.38 & 0.74 \\
& & \MakeUppercase{qic}$_{\text{\MakeUppercase{w}}}$ & 10.32 & 0.48 & 1.55 & 16.96 & 0.91 & 0.92 & 22.57 & 0.84 & 1.79 \\ \addlinespace[1ex]
\multirow{2}{*}{0.1} & \multirow{2}{*}{400} & Proposal & 7.30 & 0.21 & 0.49 & 12.06 & 0.43 & 0.33 & 17.02 & 0.46 & 0.69 \\
& & \MakeUppercase{qic}$_{\text{\MakeUppercase{w}}}$ & 10.13 & 0.47 & 1.46 & 15.75 & 0.96 & 0.91 & 22.47 & 0.96 & 1.74 \\ \addlinespace[1ex]
& \multirow{2}{*}{600} & Proposal & 6.88 & 0.20 & 0.46 & 12.01 & 0.46 & 0.34 & 16.62 & 0.50 & 0.66 \\
& & \MakeUppercase{qic}$_{\text{\MakeUppercase{w}}}$ & 9.57 & 0.50 & 1.43 & 15.11 & 0.96 & 0.91 & 21.91 & 0.98 & 1.79 \\ \addlinespace[1ex]
& \multirow{2}{*}{200} & Proposal & 10.73 & 0.51 & 0.55 & 20.12 & 1.12 & 0.34 & 26.11 & 1.11 & 0.72 \\
& & \MakeUppercase{qic}$_{\text{\MakeUppercase{w}}}$ & 13.12 & 0.77 & 1.65 & 21.56 & 1.55 & 1.01 & 30.24 & 1.51 & 2.01 \\ \addlinespace[1ex]
\multirow{2}{*}{0.5} & \multirow{2}{*}{400} & Proposal & 10.64 & 0.72 & 0.50 & 20.34 & 1.52 & 0.33 & 26.36 & 1.54 & 0.69 \\
& & \MakeUppercase{qic}$_{\text{\MakeUppercase{w}}}$ & 12.87 & 0.88 & 1.61 & 20.47 & 1.80 & 1.02 & 29.42 & 1.80 & 1.97 \\ \addlinespace[1ex]
& \multirow{2}{*}{600} & Proposal & 10.02 & 0.84 & 0.48 & 17.98 & 1.79 & 0.33 & 24.29 & 1.77 & 0.66. \\
& & \MakeUppercase{qic}$_{\text{\MakeUppercase{w}}}$ & 12.05 & 0.97 & 1.60 & 19.15 & 1.92 & 1.03 & 28.46 & 1.92 & 2.01 \\ \addlinespace[1ex]
& \multirow{2}{*}{200} & Proposal & 13.83 & 0.86 & 0.54 & 25.04 & 1.78 & 0.33 & 34.29 & 1.80 & 0.72 \\
 & & \MakeUppercase{qic}$_{\text{\MakeUppercase{w}}}$ & 17.55 & 0.95 & 1.78 & 28.00 & 1.92 & 1.10 & 41.46 & 1.93 & 2.15 \\ \addlinespace[1ex]
 \multirow{2}{*}{1.0} & \multirow{2}{*}{400} & Proposal & 12.44 & 0.98 & 0.50 & 22.03 & 1.97 & 0.33 & 31.09 & 1.97 & 0.67 \\
 & & \MakeUppercase{qic}$_{\text{\MakeUppercase{w}}}$ & 16.82 & 1.00 & 1.76 & 26.42 & 2.00 & 1.10 & 40.19 & 2.00 & 2.16 \\ \addlinespace[1ex]
 & \multirow{2}{*}{600} & Proposal & 11.33 & 1.00 & 0.48 & 20.06 & 2.00 & 0.31 & 28.79 & 2.00 & 0.63 \\
 & & \MakeUppercase{qic}$_{\text{\MakeUppercase{w}}}$ & 15.91 & 1.00 & 1.76 & 25.28 & 2.00 & 1.11 & 39.30 & 2.00 & 2.17 \\ \addlinespace[1ex]
 & \multirow{2}{*}{200} & Proposal & 34.16 & 1.00 & 0.52 & 60.56 & 2.00 & 0.31 & 95.78 & 2.00 & 0.68 \\
 & & \MakeUppercase{qic}$_{\text{\MakeUppercase{w}}}$ & 48.77 & 1.00 & 1.94 & 81.22 & 2.00 & 1.17 & 136.38 & 2.00 & 2.37 \\ \addlinespace[1ex]
 \multirow{2}{*}{3.0} & \multirow{2}{*}{400} & Proposal & 31.96 & 1.00 & 0.51 & 59.75 & 2.00 & 0.31 & 91.38 & 2.00 & 0.65 \\
 & & \MakeUppercase{qic}$_{\text{\MakeUppercase{w}}}$ & 46.25 & 1.00 & 1.95 & 80.71 & 2.00 & 1.18 & 132.43 & 2.00 & 2.36 \\ \addlinespace[1ex]
 & \multirow{2}{*}{600} & Proposal & 29.56 & 1.00 & 0.49 & 57.25 & 2.00 & 0.31 & 87.98 & 2.00 & 0.61 \\
 & & \MakeUppercase{qic}$_{\text{\MakeUppercase{w}}}$ & 44.02 & 1.00 & 1.95 & 78.45 & 2.00 & 1.18 & 130.44 & 2.00 & 2.34 \\
\end{tabular}
\end{center}
Risk, the empirical estimate of the risk of the selected model; \MakeUppercase{tp}, the average number of covariates that are true positive; \MakeUppercase{fp}, the average number of covariates that are false positive.
\label{tab6}
\end{table}

\section{Real data analysis}
\label{sec6}
The LaLonde dataset, originally presented in \cite{Lal86}, is included in the R package Matching version 4.10.14. 
In the dataset, the group that took the U.S. job training program in 1976 is denoted as $t=1$ and the group that did not take the program is denoted as $t=0$. The difference in annual income in 1978 after the training for each group is estimated via the \MakeUppercase{att}. Age (age}, years of education (educ), income in 1974 (re74), black (black), hispanic (hisp), married (married), and high school graduate or higher (nodegr) are covariates, i.e. $p=7$, and the outcome variable is the difference between income in 1978 (re78) and income in 1974 (re74). The sample size is $n=445$. The objective of our experiment here is not to extract information from these data, but to determine whether the proposed method makes a significant difference from an existing method. Therefore, to reduce the sample size of the dataset, we divided the data into three parts and analyzed each of them. Specifically, for every integer $k$, the $3k-2$-th row of data are included in Block 1, the $3k-1$-th row in Block 2, and the $3k$-th row in Block 3, respectively.

First, we estimate the propensity scores by using the covariate balancing with the identity matrix as the weighting matrix. Then, supposing that the \MakeUppercase{att} can be written as a linear sum of covariates, we estimate their coefficients by using the \MakeUppercase{sdid} approach. In Table \ref{tab7}, the covariates are selected using the proposed criterion or the \MakeUppercase{qic}$_{\text{\MakeUppercase{w}}}$ with the forward selection method and the coefficients of the selected covariates are estimated. The \MakeUppercase{qic}$_{\text{\MakeUppercase{w}}}$ selects all covariates in any block, while the proposed criterion does not select some variables. In particular, the difference between the models selected for Block 1 and Block 3 is rather large. Because we do not know the true structure of the real data, it is impossible to judge which of the two criteria is best; however, we can see that the difference between them is quite large. These results indicates that, in a setting such as here, it is essential to use a model selection criterion that is theoretically valid rather than an intuitive one.

\begin{table}[!t]
\renewcommand{\baselinestretch}{1.5}\selectfont
\caption{Comparison of the proposed criterion and the \MakeUppercase{qic}$_{\text{\MakeUppercase{w}}}$ through model selection for the LaLonde dataset. The numbers with exponential notation are the estimated coefficients for each covariate.}
\begin{center}
\begin{tabular}{ccrrrr}
Block & & \multicolumn{1}{c}{intercept} & \multicolumn{1}{c}{age} & \multicolumn{1}{c}{educ} & \multicolumn{1}{c}{re74} \\ \addlinespace[1ex]
 \multirow{2}{*}{1} & Proposal & $-$2.14$_{\,\E+3}$ & 0\phantom{{}$_{\,\E+0}$} & 0\phantom{{}$_{\,\E+0}$} & 0\phantom{{}$_{\,\E+0}$} \\
 & \MakeUppercase{qic}$_{\text{\MakeUppercase{w}}}$ & $-$1.37$_{\,\E+4}$ & 5.29$_{\,\E+2}$ & 4.43$_{\,\E+1}$ & $-$2.01$_{\,\E-1}$ \\ \addlinespace[1ex]
 \multirow{2}{*}{2} & Proposal & 1.59$_{\,\E+4}$ & 0\phantom{{}$_{\,\E+0}$} & 1.08$_{\,\E+3}$ & 2.54$_{\,\E-1}$ \\
 & \MakeUppercase{qic}$_{\text{\MakeUppercase{w}}}$ & 2.38$_{\,\E+4}$ & $-$1.91$_{\,\E+2}$ & 8.22$_{\,\E+2}$ & 2.73$_{\,\E-1}$ \\ \addlinespace[1ex]
 \multirow{2}{*}{3} & Proposal & $-$8.50$_{\,\E+2}$ & 0\phantom{{}$_{\,\E+0}$} & 0\phantom{{}$_{\,\E+0}$} & 0\phantom{{}$_{\,\E+0}$} \\
 & \MakeUppercase{qic}$_{\text{\MakeUppercase{w}}}$ & $-$8.51$_{\,\E+3}$ & 6.13\phantom{{}$_{\,\E+0}$} & 3.35$_{\,\E+2}$ & 2.56$_{\,\E-1}$ \\ \addlinespace[2ex]
Block & & \multicolumn{1}{c}{black} & \multicolumn{1}{c}{hisp} & \multicolumn{1}{c}{married} & \multicolumn{1}{c}{nodegr} \\ \addlinespace[1ex]
 \multirow{2}{*}{1} & Proposal & 0\phantom{{}$_{\,\E+0}$} & 0\phantom{{}$_{\,\E+0}$} & 0\phantom{{}$_{\,\E+0}$} & 0\phantom{{}$_{\,\E+0}$} \\
 & \MakeUppercase{qic}$_{\text{\MakeUppercase{w}}}$ & 3.55$_{\,\E+2}$ & $-$8.19$_{\,\E+3}$ & $-$9.41$_{\,\E+2}$ & $-$1.30$_{\,\E+3}$ \\ \addlinespace[1ex]
 \multirow{2}{*}{2} & Proposal & $-$1.86$_{\,\E+4}$ & $-$2.34$_{\,\E+4}$ & 4.02$_{\,\E+3}$ & $-$1.26$_{\,\E+4}$ \\
 & \MakeUppercase{qic}$_{\text{\MakeUppercase{w}}}$ & $-$1.85$_{\,\E+4}$ & $-$2.42$_{\,\E+4}$ & 4.88$_{\,\E+3}$ & $-$1.34$_{\,\E+4}$ \\ \addlinespace[1ex]
 \multirow{2}{*}{3} & Proposal & 0\phantom{{}$_{\,\E+0}$} & $-$1.03$_{\,\E+4}$ & 0\phantom{{}$_{\,\E+0}$} & 0\phantom{{}$_{\,\E+0}$} \\
 & \MakeUppercase{qic}$_{\text{\MakeUppercase{w}}}$ & 3.84$_{\,\E+3}$ & $-$6.71$_{\,\E+3}$ & 2.92$_{\,\E+3}$ & $-$5.90$_{\,\E+2}$ \\
\end{tabular}
\end{center}
``$0$'' means that the covariate is not selected in the model selection.
\label{tab7}
\end{table}

\section{Discussion}
\label{sec7}
Our proposal, the \MakeUppercase{cbd} method, is an \MakeUppercase{sdid} approach incorporating covariate balancing. In order to ensure double robustness, i.e. to ensure that the \MakeUppercase{att} is consistently estimated when the model for the propensity score is correctly specified or the model for the changes in the outcomes over time is correctly supposed, we have shown that the second-order moments of the distribution for the covariates should be balanced between the treatment and control groups with the former as the target population.
Since covariate balancing usually balances the first-order moments, the value for the fact that the moment is second-order rather than first-order lies more in having been discovered due to its unexpectedness than in how to demonstrate its validity.
Moreover, our model selection criterion for the \MakeUppercase{cbd} method is an asymptotically unbiased estimator of risk based on the loss function used for the estimation. Specifically, we derived a model selection criterion using just a \MakeUppercase{did}-specific assumption called the conditional parallel trend assumption, i.e., without relying on the ignorable treatment assignment condition, which is often assumed in causal inference. The penalty term has been shown to be considerably different from twice the number of parameters that often appears in the \MakeUppercase{aic}-type model selection criteria. Since no model selection criteria existed for even the basic \MakeUppercase{sdid} approach, we developed a model selection criterion not only for the \MakeUppercase{cbd} method, but also for when the propensity scores are known or estimated by \MakeUppercase{mle}. Numerical experiments show that the proposed method estimates the \MakeUppercase{att} more robustly than the method that estimates propensity scores by using \MakeUppercase{mle} and that the proposed criterion clearly reduces the risk targeted by the \MakeUppercase{sdid} approach compared with the intuitive generalization of the existing information criterion to the \MakeUppercase{sdid} approach. In addition, analyses of real data confirm that there is a large difference between the results of the proposed method and those of the existing method.

Although this paper deals with a simple linear model of covariates as a working model for the outcome, its results can be easily extended to situations where a nonlinear model of covariates should be used, for example, when using the kernel method. In fact, \cite{hazlett20} has treated the function $\phi(x):\mathbb{R}^p \rightarrow \mathbb{R}^q$ in a reproducing kernel Hilbert space where $q$ may be larger than $p$ or $n$. They assumed that the outcome could be represented by a linear model of $\phi(x)$, which in our setting can be written as
\begin{align}
\E[y^{[1]}(1)-y^{[0]}(1)\mid x,d^{[1]}=1]=\phi(x)^{\T}\theta.
\label{hazlett}
\end{align}
Then, by using
\begin{align*}
\sum_{i=1}^n\bigg\{\frac{d^{[1]}_{i}\phi(x_i)}{e^{[1]}(x_{i};\alpha)} - \frac{d^{[0]}_{i}\phi(x_i)}{e^{[0]}(x_{i};\alpha)}\bigg\} = 0
\end{align*}
as the moment conditions for estimating the propensity scores, they propose the kernel balancing method, which can flexibly estimate causal effects even if the relationship between the covariate and the outcome is nonlinear. For making a doubly robust estimation using the kernel balancing in the setting of this paper, together with the assumption in \eqref{hazlett}, let us additionally suppose that the change in the outcomes follows
\begin{align*}
\E[\Delta^{[k]}\mid x,d^{[k]}=1]=\phi(x)^{\T}\beta^{[k]*}.
\end{align*}
Let $k(\cdot , \cdot) : \mathbb{R}^p \times \mathbb{R}^p \rightarrow \mathbb{R}$ be the kernel function corresponding to $\phi(x)$, and let $K$ be a $p\times p$ semi-definite symmetric matrix whose $(l,m)$-component is $k(x_l,x_m)$. If we estimate the parameter $\alpha$ that satisfies
\begin{align*}
& H^{[1]}(d,x;\alpha)=\E\bigg[e^{[1]}(x;\alpha)\bigg\{\frac{d^{[1]}}{e^{[1]}(x;\alpha)}-1\bigg\}K\bigg] = {\rm O},
\\
& H^{[0]}(d,x;\alpha)=\E\bigg[e^{[1]}(x;\alpha)\bigg\{\frac{d^{[0]}}{e^{[0]}(x;\alpha)}-1\bigg\}K\bigg] = {\rm O},
\end{align*}
and then estimate $\theta$ and derive a model selection criterion based on it, it would be a natural extension of the proposed method.

There are other valuable directions of study. For example, \cite{CalS21} and \cite{good21} proposed a \MakeUppercase{did} approach in a more flexible setting than the original one, where there are multiple time points and treatment can be initiated at any of them. Moreover, \cite{de2023sev} developed a \MakeUppercase{did} approach for when there are more than two groups, and \cite{rich2023gdid} proposed one to estimate causal effects under alternative assumptions to the parallel trend assumption. While the outcome in the basic \MakeUppercase{did} approach is continuous, \cite{tchetgen24} dealt with the case where the outcome follows a generalized linear model. In this way, the \MakeUppercase{did} approach has been theoretically developed in accordance with actual problems, so we believe that extending the proposed method to enable it to deal with those problems would thus be a potentially profitable goal for the near future. In econometrics, synthetic control (\citealt{AbaDH10}) has also become standard in causal inference, and a combination of the two has been developed as synthetic \MakeUppercase{did} (\citealt{ArkAHIW21}). Here, because the objective of the information criterion is what to synthesize and with what weights, the proposed method cannot necessarily be used; however, an investigation along the lines presented in this paper might prove fruitful. Furthermore, propensity score analysis with high-dimensional covariates is also an important topic (e.g., \citealt{BelCFH17}, \citealt{CheCDDHNR18}, \citealt{NinPI20}), wherein selecting covariates that truly have impact on causal effects and robustly estimating these effects is an essential task to ensure the validity of the analysis. Even when the covariates are high-dimensional, asymptotic properties of the estimator of causal effects tend to be obtained as well, so it is surely possible to extend our method. 

\appendix 

\section{Performance of model selection criterion for semiparametric difference-in-differences when propensity scores are known}
\label{secS1}

\subsection{Extension of \MakeUppercase{qic}$_{\text{\MakeUppercase{w}}}$}
\label{sec4_3}

As a comparator for the proposed model selection criterion, we will extend \MakeUppercase{qic}$_{\text{\MakeUppercase{w}}}$ (\citealt{PlaBCWS13}), which is an intuitive model selection criterion for weighted estimation using propensity scores, to the setting of this paper. First, we construct the goodness-of-fit term using the same idea as for \MakeUppercase{qic}$_{\text{\MakeUppercase{w}}}$. To do so, the log-likelihood function for full data is differentiated, a weighted potential outcome variable is substituted into the one whose expectation is the same as the outcome variable, the differentiation is reverted, and the estimator is substituted into the parameter. The number of parameters is used as the penalty term. Within this setting, the log-likelihood function for full data is
\begin{align*}
-\frac{1}{2\sigma^2}\sum_{i=1}^n[\{\mbox{$y^{[1]}_{i}(1)-y^{[0]}_{i}(1)$ when $d_i^{[1]}=1$}\}-x_i^{\T}\theta]^2.
\end{align*}
If we take $\rho_i\Delta_i$ as the weighted potential outcome variable, its conditional expectation can be shown to be the same as the conditional expectation of the outcome variable from \eqref{abaprop}. We will call the sum of the goodness-of-fit term and the penalty term multiplied by a constant $2\sigma^2$, that is,
\begin{align*}
\sum_{i=1}^{n}\{\rho(d_i,x_i)\Delta_i-x_{i}^{\T}\hat{\theta}\}^2 + 2\sigma^2p,
\end{align*}
the \MakeUppercase{qic}$_{\text{\MakeUppercase{w}}}$ for the \MakeUppercase{sdid} approach. In fact, if $\sigma^2(x)$ is a constant, then our proposed criterion is equivalent to the \MakeUppercase{qic}$_{\text{\MakeUppercase{w}}}$. The problem is to estimate the variance $\sigma^2$ of the difference, $y^{[1]}_{i}(1)-y^{[0]}_{i}(1)$ when $d_i^{[1]}=1$. This estimate cannot be obtained from the data using only the present assumptions. Since it follows from $y^{[0]}(0)=y^{[1]}(0)$ that $y^{[1]}_{i}(1)-y^{[0]}_{i}(1)=\Delta_{i}^{[1]}-\Delta_{i}^{[0]}$, and from Assumption \ref{hypo1}, the expectation of $\Delta_{i}^{[1]}-\Delta_{i}^{[0]}$ in $d^{[1]}_{i}=1$ is the same as the expectation of $\Delta_{i}^{[1]}$ in $d^{[0]}_{i}=0$ minus $\Delta_{i}^{[1]}-\Delta_{i}^{[0]}$ in $d^{[1]}_{i}=1$. However, in order to evaluate the variance, we will assume that $\Delta_{i}^{[1]}$ and $\Delta_{i}^{[0]}$ are independent conditioned on $d^{[1]}=1$ and $d^{[0]}=1$, from which we have
\begin{align*}
& \sigma^2 = {\rm var}[\Delta^{[1]}\mid d^{[1]}=1]+{\rm var}[\Delta^{[0]}\mid d^{[0]}=1]
\\
& = \E[\Delta^2\mid d^{[1]}=1] - \E[\Delta\mid d^{[1]}=1]^2 + \E[\Delta^2\mid d^{[0]}=1] - \E[\Delta\mid d^{[0]}=1]^2.
\end{align*}
Moreover, defining $n_1\equiv\sum_{i=1}^nd^{[1]}_{i}$ and $n_0\equiv\sum_{i=1}^nd^{[0]}_{i}$, we will use
\begin{align*}
\hat{\sigma}^2 = \frac{1}{n_1}\sum_{i=1}^nd^{[1]}_{i}\Delta_i^2-\bigg(\frac{1}{n_1}\sum_{i=1}^nd^{[1]}_{i}\Delta_i\bigg)^2+\frac{1}{n_0}\sum_{i=1}^nd^{[0]}_{i}\Delta_i^2-\bigg(\frac{1}{n_0}\sum_{i=1}^nd^{[0]}_{i}\Delta_i\bigg)^2
\end{align*}
in the present study.

\begin{remark}
Let us compare $\sigma^2(x)$ in the proposed criterion with $\sigma^2$ in the \MakeUppercase{qic}$_{\text{\MakeUppercase{w}}}$. In \eqref{std}, if all expectations do not actually depend on $x$, then
\begin{align*}
\sigma^2(x) = \ &  \E\bigg[\frac{\Delta^{[1]2}}{e^{[1]}}\ \bigg|\ d^{[1]}=1\bigg] - \E[\Delta^{[1]} \mid d^{[1]}=1]^2
\\
& + \E\bigg[\frac{\Delta^{[0]2}}{e^{[0]}}\ \bigg|\ d^{[0]}=1\bigg] - \E[\Delta^{[0]} \mid d^{[0]}=1]^2 + 2\E[\Delta^{[1]}\mid d^{[1]}=1]\E[\Delta^{[0]}\mid d^{[0]}=1].
\end{align*}
Since $\Delta^{[1]}$ and $\Delta^{[0]}$ include fluctuations over time, it is natural that $\Delta^{[1]}\Delta^{[0]}$ should be non-negative. In other words, basically $\sigma^2(x)>\sigma^2$, and the penalty term of the proposed criterion is larger than that of the \MakeUppercase{qic}$_{\text{\MakeUppercase{w}}}$.
\end{remark}

\subsection{Numerical experiments}
\label{sec4_4}
Here, the performance of the proposed criterion \eqref{IC_k} when the propensity scores are known is evaluated in simulation studies. The data generation settings include Case 1-1,
\begin{align*}
& x_1 \sim{\rm Uniform}(0, 2), \quad d^{[1]}\sim{\rm Bernoulli}({\rm logit}(-x_1)), \quad y^{[0]}(0)=y^{[1]}(0)\sim{\rm N}(0,1),
\\
& \epsilon^{[0]}, \epsilon^{[1]} \sim{\rm N}(0,1), \quad y^{[0]}(1)=y^{[0]}(0)+\epsilon^{[0]}, \quad y^{[1]}(1)=y^{[1]}(0)+1+\beta^* x_1+\epsilon^{[1]},
\end{align*}
and Case 1-2,
\begin{align*}
& x_1,x_2 \sim{\rm Uniform}(0, 2), \quad d^{[1]}\sim{\rm Bernoulli}({\rm logit}(-x_1+x_2)), \quad y^{[0]}(0)=y^{[1]}(0)\sim{\rm N}(0,1),
\\
& \epsilon^{[0]}, \epsilon^{[1]} \sim{\rm N}(0,1), \quad y^{[0]}(1)=y^{[0]}(0)+\epsilon^{[0]}, \quad y^{[1]}(1)=y^{[1]}(0)+1+\beta^* (x_1+x_2)+\epsilon^{[1]}.
\end{align*}
In both cases, $\beta^*$ represents the magnitude of contribution of the covariate $x_1$ to the causal effect.

Table \ref{tab1} shows whether the second term in \eqref{IC_k}, the penalty term of the proposed criterion, accurately approximates the bias in \eqref{pena} that we originally wanted to evaluate. For all values of $\beta^*$ and sample size $n$, our penalty terms are close to the true values of the bias, while the evaluated penalty terms in the \MakeUppercase{qic}$_{\text{\MakeUppercase{w}}}$ significantly underestimate the bias.

\begin{table}
\renewcommand{\baselinestretch}{1.5}\selectfont
\caption{Bias evaluation for the case where the propensity scores are known.}
\begin{center}
\begin{tabular}{rrrrrrrr}
& & \multicolumn{3}{c}{Case 1-1} & \multicolumn{3}{c}{Case 1-2} \\
 \multicolumn{1}{c}{$\beta^*$} & \multicolumn{1}{c}{$n$} & \multicolumn{1}{c}{True} & \multicolumn{1}{c}{Proposal} & \multicolumn{1}{c}{\MakeUppercase{qic}$_{\text{\MakeUppercase{w}}}$} & \multicolumn{1}{c}{True} & \multicolumn{1}{c}{Proposal} & \multicolumn{1}{c}{\MakeUppercase{qic}$_{\text{\MakeUppercase{w}}}$} \\ \addlinespace[1ex]
 & 200 & ~37.54 & ~37.29 & 2.23 & ~39.20 & ~39.43 & ~5.96 \\
 0.1       & 400 & ~38.56 & ~37.71 & 2.24 & ~39.75 & ~39.74 & ~5.98 \\
       & 600 & ~37.89 & ~37.89 & 2.27 & ~39.79 & ~39.68 & ~6.00 \\ \addlinespace[1ex]
 & 200 & ~57.34 & ~56.28 & 2.31 & ~59.77 & ~58.69 & ~6.39 \\
 0.5       & 400 & ~59.36 & ~56.28 & 2.33 & ~59.36 & ~59.52 & ~6.48 \\
       & 600 & ~58.56 & ~57.35 & 2.34 & ~60.20 & ~59.37 & ~6.48 \\ \addlinespace[1ex]
 & 200 & ~92.48 & ~91.33 & 2.56 & 103.24 & ~98.85 & ~7.91 \\
 1.0       & 400 & ~98.17 & ~91.43 & 2.58 & 108.09 & ~99.46 & ~7.99 \\
       & 600 & ~87.03 & ~92.62 & 2.59 & ~98.33 & ~99.57 & ~7.99 \\ \addlinespace[1ex]
 & 200 & 359.64 & 358.56 & 5.17 & 439.35 & 429.91 & 23.65 \\
 3.0       & 400 & 345.73 & 348.96 & 5.20 & 441.47 & 430.88 & 24.05 \\
       & 600 & 367.73 & 365.75 & 5.22 & 435.32 & 430.09 & 23.87 \\ 
\end{tabular}
\end{center}
True, the true value of the bias evaluated by the Monte Carlo method; Proposal, the evaluated penalty term of the proposed criterion; \MakeUppercase{qic}$_{\text{\MakeUppercase{w}}}$, the evaluated penalty term of the \MakeUppercase{qic}$_{\text{\MakeUppercase{w}}}$.
\label{tab1}
\end{table}

Next, let us examine the performance of the model selection using the proposed criterion, with data generated from
\begin{align*}
& x_1,\ldots,x_l\sim{\rm Uniform}(0, 2), \quad x=(1,x_1,\ldots,x_{l})^{\T}, \quad y^{[0]}(0)=y^{[1]}(0)\sim{\rm N}(0,1),
\\
& \epsilon^{[0]},\epsilon^{[1]} \sim{\rm N}(0,1), \quad y^{[0]}(1)=y^{[0]}(0)+\epsilon^{[0]}, \quad y^{[1]}(1)=y^{[1]}(0)+x^{\T}\beta^{[1]*}+\epsilon^{[1]},
\end{align*}
in Case 2-1,
\begin{align*}
l=4, \quad \beta^{[1]*} = (1,\beta^*,0_{3}^{\T})^{\T}, \quad d^{[1]}\sim{\rm Bernoulli}({\rm logit}(-x_1)),
\end{align*}
Case 2-2,
\begin{align*}
l=4, \quad \beta^{[1]*} = (1,1_2^{\T}\beta^*,0_{2}^{\T})^{\T}, \quad d^{[1]}\sim{\rm Bernoulli}({\rm logit}(-x_1+x_2)),
\end{align*}
and Case 2-3, 
\begin{align*}
l=6, \quad \beta^{[1]*} = (1,1_2^{\T}\beta^*,0_{4}^{\T})^{\T}, \quad d^{[1]}\sim{\rm Bernoulli}({\rm logit}(-x_1+x_2)).
\end{align*}
Table \ref{tab2} compares the proposed criterion and the \MakeUppercase{qic}$_{\text{\MakeUppercase{w}}}$ when using the forward selection method. As the main index to measure the goodness of the criteria, we use an empirical estimate of the risk, specifically, the average of 3,000 values of
\begin{align*}
\sum_{i=1}^{n}e^{[1]}(x_i;\alpha)(x_{i}^{\T}\theta^{*}-x_{i}^{\T}\hat{\theta})^2.
\end{align*}
In Case 2-2, where there are relatively few unnecessary covariates among the candidates, that is, where the negative effect of false positives is less likely to appear, the \MakeUppercase{qic}$_{\text{\MakeUppercase{w}}}$ slightly outperforms the proposed criterion for $\beta^*=0.5$ and $\beta^*=1.0$. However, the proposed criterion outperforms the \MakeUppercase{qic}$_{\text{\MakeUppercase{w}}}$ for the other values of beta in Case 2-2, and it outperforms the \MakeUppercase{qic}$_{\text{\MakeUppercase{w}}}$ for all values in Case 2-1 and Case 2-3, i.e., settings with many covariates are unrelated to the true structure. Here, since the \MakeUppercase{qic}$_{\text{\MakeUppercase{w}}}$ underestimates the penalty term, the false positive must be large and too many covariates are selected in all settings.

\begin{table}
\renewcommand{\baselinestretch}{1.5}\selectfont
\caption{Comparison of the proposed criterion and the \MakeUppercase{qic}$_{\text{\MakeUppercase{w}}}$ through model selection for case in which the propensity scores are known.}
\begin{center}
\begin{tabular}{rrcrrrrrrrrr}
 & & & \multicolumn{3}{c}{Case 2-1} & \multicolumn{3}{c}{Case 2-2} & \multicolumn{3}{c}{Case 2-3} \\
\multicolumn{1}{c}{$\beta^*$} & \multicolumn{1}{c}{$n$} & & \multicolumn{1}{c}{Risk} & \multicolumn{1}{c}{\MakeUppercase{tp}} & \multicolumn{1}{c}{\MakeUppercase{fp}} & \multicolumn{1}{c}{Risk} & \multicolumn{1}{c}{\MakeUppercase{tp}} & \multicolumn{1}{c}{\MakeUppercase{fp}} & \multicolumn{1}{c}{Risk} & \multicolumn{1}{c}{\MakeUppercase{tp}} & \multicolumn{1}{c}{\MakeUppercase{fp}} \\ \addlinespace[1ex]
 & \multirow{2}{*}{200} & Proposal & 11.05 & 0.50 & 1.29 & ~14.07 & 0.61 & 0.62 & ~19.01 & 0.63 & 1.15 \\
                               & & \MakeUppercase{qic}$_{\text{\MakeUppercase{w}}}$ & 11.29 & 0.53 & 1.47 & ~15.94 & 0.90 & 0.95 & ~21.74 & 0.90 & 1.73 \\ \addlinespace[1ex]
 \multirow{2}{*}{0.1} & \multirow{2}{*}{400} & Proposal & 11.04 & 0.47 & 1.30 & ~13.99 & 0.68 & 0.52 & ~19.72 & 0.69 & 1.08 \\
                               & & \MakeUppercase{qic}$_{\text{\MakeUppercase{w}}}$ & 11.27 & 0.51 & 1.49 & ~15.23 & 0.98 & 0.86 & ~22.45 & 0.98 & 1.77 \\ \addlinespace[1ex]
                    & \multirow{2}{*}{600} & Proposal & 10.86 & 0.48 & 1.25 & ~14.40 & 0.75 & 0.51 & ~19.51 & 0.72 & 1.14 \\
                               & & \MakeUppercase{qic}$_{\text{\MakeUppercase{w}}}$ & 11.00 & 0.51 & 1.44 & ~15.88 & 1.07 & 0.83 & ~22.62 & 1.02 & 1.81 \\
\addlinespace[1ex]
 & \multirow{2}{*}{200} & Proposal & 14.68 & 0.70 & 1.35 & ~23.20 & 1.23 & 0.61 & ~30.01 & 1.20 & 1.21 \\
                               & & \MakeUppercase{qic}$_{\text{\MakeUppercase{w}}}$ & 14.94 & 0.74 & 1.61 & ~23.33 & 1.50 & 1.00 & ~31.61 & 1.48 & 1.96 \\ \addlinespace[1ex]
\multirow{2}{*}{0.5} & \multirow{2}{*}{400} & Proposal & 14.42 & 0.83 & 1.29 & ~22.95 & 1.60 & 0.60 & ~31.82 & 1.55 & 1.14 \\
                               & & \MakeUppercase{qic}$_{\text{\MakeUppercase{w}}}$ & 14.71 & 0.86 & 1.58 & ~22.48 & 1.78 & 1.01 & ~32.67 & 1.75 & 1.93 \\ \addlinespace[1ex]
                    & \multirow{2}{*}{600} & Proposal & 14.67 & 0.89 & 1.31 & ~23.24 & 1.74 & 0.57 & ~30.25 & 1.75 & 1.20 \\
                               & & \MakeUppercase{qic}$_{\text{\MakeUppercase{w}}}$ & 14.93 & 0.91 & 1.61 & ~23.00 & 1.88 & 0.98 & ~31.96 & 1.88 & 1.98 \\
\addlinespace[1ex]
 & \multirow{2}{*}{200} & Proposal & 21.69 & 0.89 & 1.39 & ~37.19 & 1.69 & 0.61 & ~48.24 & 1.69 & 1.26 \\
                               & & \MakeUppercase{qic}$_{\text{\MakeUppercase{w}}}$ & 22.00 & 0.91 & 1.67 & ~36.59 & 1.85 & 1.04 & ~49.53 & 1.84 & 2.18 \\ \addlinespace[1ex]
\multirow{2}{*}{1.0} & \multirow{2}{*}{400} & Proposal & 21.09 & 0.97 & 1.40 & ~34.65 & 1.95 & 0.63 & ~46.07 & 1.94 & 1.20 \\
                               & & \MakeUppercase{qic}$_{\text{\MakeUppercase{w}}}$ & 21.61 & 0.98 & 1.76 & ~35.71 & 1.99 & 1.08 & ~49.45 & 1.98 & 2.12 \\ \addlinespace[1ex]
                    & \multirow{2}{*}{600} & Proposal & 20.74 & 1.00 & 1.35 & ~32.16 & 1.99 & 0.60 & ~43.66 & 2.00 & 1.24 \\
                               & & \MakeUppercase{qic}$_{\text{\MakeUppercase{w}}}$ & 21.30 & 1.00 & 1.72 & ~34.38 & 1.99 & 1.07 & ~48.02 & 2.00 & 2.19 \\
\addlinespace[1ex]
 & \multirow{2}{*}{200} & Proposal & 67.79 & 0.99 & 1.54 & 136.61 & 1.96 & 0.67 & 180.41 & 1.96 & 1.33 \\
                               & & \MakeUppercase{qic}$_{\text{\MakeUppercase{w}}}$ & 69.22 & 1.00 & 1.93 & 141.09 & 1.99 & 1.19 & 194.02 & 1.98 & 2.34 \\ \addlinespace[1ex]
\multirow{2}{*}{3.0} & \multirow{2}{*}{400} & Proposal & 66.20 & 1.00 & 1.57 & 133.32 & 2.00 & 1.20 & 175.10 & 2.00 & 1.25 \\
                               & & \MakeUppercase{qic}$_{\text{\MakeUppercase{w}}}$ & 67.83 & 1.00 & 1.97 & 141.74 & 2.00 & 1.20 & 191.79 & 2.00 & 2.38 \\ \addlinespace[1ex]
                    & \multirow{2}{*}{600} & Proposal & 65.76 & 1.00 & 1.48 & 126.36 & 2.00 & 0.62 & 173.78 & 2.00 & 1.29 \\
                               & & \MakeUppercase{qic}$_{\text{\MakeUppercase{w}}}$ & 67.51 & 1.00 & 1.91 & 135.24 & 2.00 & 1.17 & 191.42 & 2.00 & 2.36 \\
\end{tabular}
\end{center}
Risk, the empirical estimate of the risk of the selected model; \MakeUppercase{tp}, the average number of covariates that are true positive; \MakeUppercase{fp}, the average number of covariates that are false positive.
\label{tab2}
\end{table}

\section{Model selection criterion for semiparametric difference-in-differences via maximum likelihood estimation}\label{secA}

\subsection{Derivation of model selection criterion}\label{secA_1}

For the parameter $\alpha$ in the propensity scores, the maximum likelihood estimator (\MakeUppercase{mle}) $\hat{\alpha}^{\rm \MakeUppercase{ml}}$ is obtained by maximizing $\sum_{i=1}^{n}\{d_{i}^{[0]}\log e^{[0]}( x_i;\alpha)+d_{i}^{[1]}\log e^{[1]}( x_i;\alpha)\}$.
Denoting the Fisher information matrix based on this likelihood as $I(\alpha)$, the error of $\hat{\alpha}^{\rm \MakeUppercase{ml}}$ multiplied by $\sqrt{\vphantom{x}}{n}$ can be written as
\begin{align}
&\sqrt{\vphantom{x}}{n}(\hat{\alpha}^{\rm \MakeUppercase{ml}}-\alpha^{*}) \notag \\
&=\frac{1}{\sqrt{\vphantom{x}}{n}}I(\alpha^{*})^{-1}\sum_{i=1}^{n}\biggl\{d_{i}^{[0]}\dfrac{\partial}{\partial \alpha}\log e^{[0]}( x_i;\alpha^{*})+d_{i}^{[1]}\dfrac{\partial}{\partial \alpha}\log e^{[1]}( x_i;\alpha^{*})\biggr\} + \oP(1).
\label{mlediff}
\end{align}
Let us derive the model selection criterion when the propensity scores are estimated by the \MakeUppercase{mle}, using the same approach as the model selection criterion for the covariate balancing in the difference-in-differences (\MakeUppercase{cbd}) method.
In its derivation, we need to evaluate the third term of the risk in \eqref{pena}, 
\begin{align*}
\sum_{i=1}^n 2\E[e^{[1]}( x_i;\hat{\alpha}^{\rm \MakeUppercase{ml}})\{\rho( d_i, x_i;\hat{\alpha}^{\rm \MakeUppercase{ml}})\Delta_i-\E[y^{[1]}_{i}-y^{[0]}_{i} \mid  x_i, d^{[1]}_{i}=1]\} x_{i}^{\T}(\hat{\theta}^{\rm \MakeUppercase{ml}}-\theta^*)].
\end{align*}
Substituting $ x_i^{\T}\theta^*$ into $\E[y^{[1]}_{i}-y^{[0]}_{i} \mid  x_i, d^{[1]}_{i}=1]$ first, and then substituting the term corresponding to the right-hand side of \eqref{expand} into $\hat{\theta}^{\rm \MakeUppercase{ml}}-\theta^*$, yields
\begin{align*}
& 2\E\bigg[\frac{1}{\sqrt{\vphantom{x}}{n}}\sum_{i=1}^n e^{[1]}( x_{i};\hat{\alpha}^{\rm \MakeUppercase{ml}})\{\rho( d_i, x_i;\hat{\alpha}^{\rm \MakeUppercase{ml}})\Delta_i- x_i^{\T}\theta^*\} x_i^{\T} \bigg\{\frac{1}{n}\sum_{j=1}^ne^{[1]}( x_j;\hat{\alpha}^{\rm \MakeUppercase{ml}}) x_j x_j^{\T}\bigg\}^{-1} 
\\
& \ \phantom{\E\bigg[} \times\frac{1}{\sqrt{\vphantom{x}}{n}}\sum_{j=1}^n e^{[1]}( x_j;\hat{\alpha}^{\rm \MakeUppercase{ml}})\{\rho( d_j, x_j;\hat{\alpha}^{\rm \MakeUppercase{ml}})\Delta_j- x_j^{\T}\theta^*\} x_j+\oP(1)\bigg].
\end{align*}
Since $n^{-1}\sum_{j=1}^ne^{[1]}( x_j;\hat{\alpha}^{\rm \MakeUppercase{ml}}) x_j x_j^{\T}$ converges in probability to
\begin{align*}
L(\alpha^{*})\equiv\E[e^{[1]}( x;\alpha^{*}) x x^{\T}],
\end{align*}
by letting $v$ be the Gaussian random vector that is the weak limit of
\begin{align}
\frac{1}{\sqrt{\vphantom{x}}{n}}\sum_{i=1}^ne^{[1]}( x_i;\hat{\alpha}^{\rm \MakeUppercase{ml}})\{\rho(d_i, x_i;\hat{\alpha}^{\rm \MakeUppercase{ml}})\Delta_i- x_i^{\T}\theta^*\} x_i,
\label{weaks}
\end{align}
we can express the weak limit of \eqref{pena} with expectation removed as
\begin{align*}
c^{\rm limit} = 2v^{\T}L(\alpha^{*})^{-1}v = 2{\rm tr}\{L(\alpha^{*})^{-1}vv^{\T}\}.
\end{align*}
When $\hat{\theta}^{\rm \MakeUppercase{ml}}$ is a consistent estimator of $\theta^*$, it holds that ${\rm E}[vv^{\T}]={\rm var}[v]$ since the expectation of $v$ becomes $0$.
Then, the asymptotic evaluation of \eqref{pena}, which is to be used in the information criterion, is written as
\begin{align*}
\E[c^{\rm limit}] =  2{\rm tr}\{L(\alpha^{*})^{-1}{\rm var}[v]\}.
\end{align*}

Now let us derive a specific expression for ${\rm var}[v]$.
Since the error of the parameter $\hat{\alpha}^{\rm \MakeUppercase{ml}}$ in the propensity scores is written as \eqref{mlediff}, we can expand \eqref{weaks} as
\begin{align*}
& \frac{1}{\sqrt{\vphantom{x}}{n}}\sum_{i=1}^ne^{[1]}( x_i;\hat{\alpha}^{\rm \MakeUppercase{ml}})\{\rho(d_i, x_i;\hat{\alpha}^{\rm \MakeUppercase{ml}})\Delta_i- x_i^{\T}\theta^*\} x_i
\\
& = \frac{1}{\sqrt{\vphantom{x}}{n}}\sum_{i=1}^{n} x_i\bigg\{e^{[1]}( x_i;\alpha^{*})\{\rho(d_i, x_i;\alpha^{*})\Delta_i- x_i^{\T}\theta^*\}
\\
& \ \phantom{= \frac{1}{\sqrt{\vphantom{x}}{n}}\sum_{i=1}^{n} x_i} + \frac{\partial}{\partial\alpha^{\T}}\bigg(e^{[1]}( x_i;\alpha^{*})\bigg[\frac{d^{[1]}_{i}-e^{[1]}( x_i;\alpha^{*})}{e^{[1]}( x_i;\alpha^{*})\{1-e^{[1]}( x_i;\alpha^{*})\}}\Delta_i- x_i^{\T}\theta^*\bigg]\bigg)(\hat{\alpha}^{\rm \MakeUppercase{ml}}-\alpha^{*})\bigg\}
\\
& \ \phantom{= \frac{1}{\sqrt{\vphantom{x}}{n}}\sum_{i=1}^{n} x_i} \times\{1+\oP(1)\}
\\
& = \frac{1}{\sqrt{\vphantom{x}}{n}}\sum_{i=1}^{n}\Big[e^{[1]}( x_i;\alpha^{*})\{\rho(d_i, x_i;\alpha^{*})\Delta_i- x_i^{\T}\theta^*\} x_{i} \\
& \ \phantom{= \frac{1}{\sqrt{\vphantom{x}}{n}}} - M(\alpha^{*},\theta^{*})I(\alpha^{*})^{-1}\Big\{d_i^{[0]}\dfrac{\partial}{\partial \alpha}\log e^{[0]}( x_i;\alpha^{*})+d_i^{[1]}\frac{\partial}{\partial\alpha}\log e^{[1]}( x_i;\alpha^{*})\Big\}\Big]\{1+\oP(1)\}.
\end{align*}
Then, by defining
\begin{align}
V^{\rm \MakeUppercase{ml}}(\alpha^{*},\theta^*) & \equiv {\rm var}\Big[e^{[1]}( x;\alpha^{*})\{\rho(d, x;\alpha^{*})\Delta- x^{\T}\theta^*\} x
\notag \\
& \ \phantom{\equiv {\rm var}[} - M(\alpha^{*},\theta^{*})I(\alpha^{*})^{-1}\Big\{d^{[0]}\frac{\partial}{\partial\alpha}\log e^{[0]}( x;\alpha^{*})+d^{[1]}\frac{\partial}{\partial\alpha}\log e^{[1]}( x;\alpha^{*})\Big\}\Big],
\label{defV}
\end{align}
it can be shown that ${\rm var}[v]=V^{\rm \MakeUppercase{ml}}(\alpha^{*},\theta^*)$.
Thus, the following theorem is obtained.

\begin{theorem}
When the propensity scores are estimated using the \MakeUppercase{mle}, the third term in \eqref{risk} is evaluated as
\begin{align*}
\E[c^{\rm limit}] =  2{\rm tr}\{L(\alpha^{*})^{-1}V^{\rm \MakeUppercase{ml}}(\alpha^{*},\theta^*)\},
\end{align*}
where $L(\cdot)$ and $V^{\rm \MakeUppercase{ml}}(\cdot,\cdot)$ are defined in \eqref{defL} and \eqref{defV}, respectively. 
\end{theorem}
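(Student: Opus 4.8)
The plan is to follow the template established for the \MakeUppercase{cbd} method in Theorem \ref{theorem3}, replacing the generalized-method-of-moments asymptotics with the score-based asymptotics of the \MakeUppercase{mle} recorded in \eqref{mlediff}. The quantity to evaluate is the third term of the risk in \eqref{pena}. First I would replace the conditional \MakeUppercase{att} $\E[y^{[1]}_i-y^{[0]}_i\mid x_i,d^{[1]}_i=1]$ by its linear working model $x_i^{\T}\theta^*$; this incurs no leading-order cost because the working-model residual is orthogonal to $e^{[1]}(x)x$, as in \eqref{cons}. Inserting the expansion of $\hat{\theta}^{\rm \MakeUppercase{ml}}-\theta^*$ from \eqref{expand} then turns the term into a quadratic form built from the normalized sum \eqref{weaks}. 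Since the sample cross-moment matrix converges in probability to $L(\alpha^*)$ of \eqref{defL}, the problem reduces to showing $\E[c^{\rm limit}]=2{\rm tr}\{L(\alpha^*)^{-1}\E[vv^{\T}]\}$, with $v$ the Gaussian weak limit of \eqref{weaks}.

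The heart of the argument is an influence-function representation of $v$. I would Taylor-expand the summand $e^{[1]}(x_i;\hat{\alpha}^{\rm \MakeUppercase{ml}})\{\rho(d_i,x_i;\hat{\alpha}^{\rm \MakeUppercase{ml}})\Delta_i-x_i^{\T}\theta^*\}x_i$ about $\alpha^*$ to first order, taking care that $\rho$ depends on $\alpha$ through both $e^{[1]}$ and $e^{[0]}$; the resulting derivative factor is exactly $M(\alpha^*,\theta^*)$ acting on $\sqrt{\vphantom{x}}{n}(\hat{\alpha}^{\rm \MakeUppercase{ml}}-\alpha^*)$. Substituting the \MakeUppercase{mle} expansion \eqref{mlediff} for this error then rewrites $v$, up to $\oP(1)$, as an average of independent and identically distributed influence functions coinciding with the bracketed quantity inside \eqref{defV}. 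A central limit theorem, justified by the finiteness of the relevant second moments together with a dominated first derivative, yields ${\rm var}[v]=V^{\rm \MakeUppercase{ml}}(\alpha^*,\theta^*)$. Because a correctly specified propensity score model makes $\hat{\alpha}^{\rm \MakeUppercase{ml}}$ consistent for $\alpha^*$ and hence $\hat{\theta}^{\rm \MakeUppercase{ml}}$ consistent for $\theta^*$, the limit $v$ has mean zero, so $\E[vv^{\T}]={\rm var}[v]$ and the stated formula follows.

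The step I expect to be the main obstacle is the joint linearization that couples the plug-in estimating equation with the estimation error of $\hat{\alpha}^{\rm \MakeUppercase{ml}}$. Both are driven by the same sample, so the influence function splits into a direct term and an \MakeUppercase{mle}-correction term whose mutual covariance must be retained inside ${\rm var}[v]$; one cannot simply add the two variances, and it is precisely this cross-covariance that distinguishes $V^{\rm \MakeUppercase{ml}}$ in \eqref{defV} from a naive variance. Making the first-order Taylor remainder genuinely $\oP(1)$ after the $\sqrt{\vphantom{x}}{n}$ scaling requires continuous differentiability of $e^{[1]}(x;\alpha)$ near $\alpha^*$, an integrable envelope for its derivative, and a nonsingular Fisher information $I(\alpha^*)$ --- regularity conditions entirely parallel to Assumptions \ref{hypo5}--\ref{hypo6}. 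Once the influence function is identified, reading off its variance as \eqref{defV} is mechanical, so essentially all of the analytic content resides in this linearization.
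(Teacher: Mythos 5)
Your proposal follows essentially the same route as the paper: substitute the working model $x_i^{\T}\theta^*$ for the conditional \MakeUppercase{att}, plug in the expansion of $\hat{\theta}^{\rm \MakeUppercase{ml}}-\theta^*$, identify the limit $L(\alpha^*)$ and the Gaussian limit $v$ of \eqref{weaks}, Taylor-expand the summand in $\alpha$ to produce the $M(\alpha^*,\theta^*)$ factor, insert the \MakeUppercase{mle} expansion \eqref{mlediff}, and read off ${\rm var}[v]=V^{\rm \MakeUppercase{ml}}(\alpha^*,\theta^*)$ from the resulting influence function. Your explicit remark that the direct term and the \MakeUppercase{mle}-correction term share the same sample, so their cross-covariance must be kept inside the single variance in \eqref{defV}, is exactly what the paper's definition of $V^{\rm \MakeUppercase{ml}}$ encodes.
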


Since the penalty term in this theorem depends on the true values of the parameters, we can use empirical estimates in practice.
Specifically, we will replace $L(\alpha^*)$ by $L_n(\hat{\alpha}^{\rm \MakeUppercase{ml}}) \equiv n^{-1}\sum_{i=1}^ne^{[1]}( x_i;\allowbreak\hat{\alpha}^{\rm \MakeUppercase{ml}}) x_i x_i^{\T}$ and $V^{\rm \MakeUppercase{ml}}(\alpha^*,\theta^*)$ by 
\begin{align*}
& V_n^{\rm \MakeUppercase{ml}}(\hat{\alpha}^{\rm \MakeUppercase{ml}},\hat{\theta}^{\rm \MakeUppercase{ml}})
\\
& \equiv \frac{1}{n}\sum_{i=1}^n\Big[e^{[1]}( x_i;\hat{\alpha}^{\rm \MakeUppercase{ml}})\{\rho(d_i, x_i;\hat{\alpha}^{\rm \MakeUppercase{ml}})\Delta_i- x_i^{\T}\hat{\theta}^{\rm \MakeUppercase{ml}}\} x_i
\notag \\
& \ \phantom{\equiv \frac{1}{n}\sum_{i=1}^n} - M_n(\hat{\alpha}^{\rm \MakeUppercase{ml}},\hat{\theta}^{\rm \MakeUppercase{ml}})I(\hat{\alpha}^{\rm \MakeUppercase{ml}})^{-1}\Big\{d_i^{[0]}\frac{\partial}{\partial\alpha}\log e^{[0]}( x_i;\hat{\alpha}^{\rm \MakeUppercase{ml}})+d_i^{[1]}\frac{\partial}{\partial\alpha}\log e^{[1]}( x_i;\hat{\alpha}^{\rm \MakeUppercase{ml}})\Big\}\Big],
\end{align*}
where $M_n(\cdot,\cdot)$ is defined in \eqref{defMn}.
As a result, for when the propensity scores are estimated with the \MakeUppercase{mle}, we propose
\begin{align}
\sum_{i=1}^{n}e^{[1]}( x_{i};\hat{\alpha}^{\rm \MakeUppercase{ml}})\{\rho(d_i, x_{i};\hat{\alpha}^{\rm \MakeUppercase{ml}})\Delta_i- x_{i}^{\T}\hat{\theta}^{\rm \MakeUppercase{ml}}\}^2 + 2{\rm tr}\{L_n(\hat{\alpha}^{\rm \MakeUppercase{ml}})^{-1}V_n^{\rm \MakeUppercase{ml}}(\hat{\alpha}^{\rm \MakeUppercase{ml}},\hat{\theta}^{\rm \MakeUppercase{ml}})\}
\label{ICMLE}
\end{align}
as the model selection criterion, which is an asymptotically unbiased estimator of the risk in \eqref{risk}.

\subsection{Numerical experiments}\label{secA_2}

The simulation studies presented in this section examine the accuracy of the approximation for the penalty term and the performance of the model selection criterion in \eqref{ICMLE}.
The simulation setting is the same as in Section \ref{sec4_4}, and the number of repetitions is 3,000.
Table \ref{tabs1} lists results for data generated as in Case 1-1 and Case 1-2 in Section \ref{sec4_4} to check whether the second term in \eqref{ICMLE}, the penalty term of the proposed criterion, accurately approximates the bias in \eqref{pena}, which we originally wanted to evaluate, for various values of the parameter $\beta^*$ and sample size $n$.
In all cases, the evaluated penalty terms of the proposed criterion are much closer to the bias than the evaluated penalty terms of the \MakeUppercase{qic}$_{\text{\MakeUppercase{w}}}$, and its approximation accuracy is high.
In contrast, the penalty term of the \MakeUppercase{qic}$_{\text{\MakeUppercase{w}}}$ significantly underestimates the bias in all cases.

\begin{table}[!t]
\renewcommand{\baselinestretch}{1.5}\selectfont
\caption{Bias evaluation for case in which the propensity scores are estimated using the \MakeUppercase{mle}.}
\begin{center}
\begin{tabular}{rrrrrrrr}
 & & \multicolumn{3}{c}{Case 1-1} & \multicolumn{3}{c}{Case 1-2} \\
 \multicolumn{1}{c}{$\beta^*$} & \multicolumn{1}{c}{$n$} & \multicolumn{1}{c}{True} & \multicolumn{1}{c}{Proposal} & \multicolumn{1}{c}{\MakeUppercase{qic}$_{\text{\MakeUppercase{w}}}$} & \multicolumn{1}{c}{True} & \multicolumn{1}{c}{Proposal} & \multicolumn{1}{c}{\MakeUppercase{qic}$_{\text{\MakeUppercase{w}}}$} \\
\addlinespace[1ex]
       & 200 & 7.48 & 7.33 & 2.23 & 17.91 & 17.25 & 5.96 \\
 0.1   & 400 & 7.57 & 7.29 & 2.24 & 17.79 & 17.48 & 5.98 \\
       & 600 & 7.38 & 7.33 & 2.27 & 17.14 & 17.40 & 6.00 \\
\addlinespace[1ex]
       & 200 & 8.26 & 8.09 & 2.31 & 19.47 & 18.86 & 6.39 \\
 0.5   & 400 & 7.88 & 8.01 & 2.33 & 19.60 & 19.40 & 6.48 \\
       & 600 & 7.75 & 8.03 & 2.34 & 18.95 & 19.25 & 6.48 \\
\addlinespace[1ex]
       & 200 & 9.78 & 9.43 & 2.58 & 24.67 & 22.89 & 7.91 \\
 1.0   & 400 & 9.17 & 9.28 & 2.58 & 23.01 & 23.19 & 7.99 \\
       & 600 & 9.21 & 9.27 & 2.59 & 23.41 & 23.20 & 7.99 \\
\addlinespace[1ex]
       & 200 & 19.12 & 19.31 & 5.17 & 54.75 & 54.89 & 23.66 \\
 3.0   & 400 & 18.46 & 18.63 & 5.20 & 53.13 & 54.85 & 24.04 \\
       & 600 & 17.79 & 18.53 & 5.22 & 53.79 & 54.66 & 23.87 \\
\end{tabular}
\end{center}
True, the true value of the bias evaluated by the Monte Carlo method; Proposal, the penalty term of the proposed criterion; \MakeUppercase{qic}$_{\text{\MakeUppercase{w}}}$, the penalty term of the \MakeUppercase{qic}$_{\text{\MakeUppercase{w}}}$.
\label{tabs1}
\end{table}

Table \ref{tabs2} lists results for data generated as in Case 2-1, Case 2-2, and Case 2-3 in Section \ref{sec4_4} and comparing the proposed criterion and the \MakeUppercase{qic}$_{\text{\MakeUppercase{w}}}$ through the forward selection method.
The main index of goodness for the model selection criteria is the empirical estimate of the risk.
Similarly to Section \ref{sec5_2}, where propensity scores are estimated using the \MakeUppercase{cbd} method, the proposed criterion always outperforms the \MakeUppercase{qic}$_{\text{\MakeUppercase{w}}}$ in all cases.
The \MakeUppercase{qic}$_{\text{\MakeUppercase{w}}}$ underestimates the penalty term; thus, too many covariates are selected.

\begin{table}
\renewcommand{\baselinestretch}{1.5}\selectfont
\caption{Comparison of the proposed criterion and the \MakeUppercase{qic}$_{\text{\MakeUppercase{w}}}$ through model selection for case in which the propensity scores are estimated using the \MakeUppercase{mle}.}
\begin{center}
\begin{tabular}{rrrrrrrrrrrr}
 & & & \multicolumn{3}{c}{Case 2-1} & \multicolumn{3}{c}{Case 2-2} & \multicolumn{3}{c}{Case 2-3} \\
 \multicolumn{1}{c}{$\beta^*$} & \multicolumn{1}{c}{$n$} & & \multicolumn{1}{c}{Risk} & \multicolumn{1}{c}{\MakeUppercase{tp}} & \multicolumn{1}{c}{\MakeUppercase{fp}} & \multicolumn{1}{c}{Risk} & \multicolumn{1}{c}{\MakeUppercase{tp}} & \multicolumn{1}{c}{\MakeUppercase{fp}} & \multicolumn{1}{c}{Risk} & \multicolumn{1}{c}{\MakeUppercase{tp}} & \multicolumn{1}{c}{\MakeUppercase{fp}} \\
\addlinespace[1ex]
 & \multirow{2}{*}{200} & \multicolumn{1}{c}{Proposal} & 7.67 & 0.41 & 1.26 & 9.38 & 0.62 & 0.61 & 11.61 & 0.48 & 1.01 \\
 & & \multicolumn{1}{c}{\MakeUppercase{qic}$_{\text{\MakeUppercase{w}}}$} & 10.35 & 0.47 & 1.55 & 14.68 & 0.87 & 0.90 & 20.09 & 0.86 & 1.76 \\
\addlinespace[1ex]
 \multirow{2}{*}{0.1} & \multirow{2}{*}{400} & \multicolumn{1}{c}{Proposal} & 7.70 & 0.42 & 1.28 & 9.44 & 0.58 & 0.55 & 12.39 & 0.50 & 0.97 \\
 & & \multicolumn{1}{c}{\MakeUppercase{qic}$_{\text{\MakeUppercase{w}}}$} & 10.44 & 0.47 & 1.46 & 14.92 & 0.92 & 0.90 & 21.04 & 0.95 & 1.74 \\
\addlinespace[1ex]
 & \multirow{2}{*}{600} & \multicolumn{1}{c}{Proposal} & 7.45 & 0.42 & 1.24 & 9.66 & 0.63 & 0.54 & 12.16 & 0.51 & 0.89 \\
 & & \multicolumn{1}{c}{\MakeUppercase{qic}$_{\text{\MakeUppercase{w}}}$} & 9.89 & 0.49 & 1.43 & 14.93 & 0.98 & 0.91 & 21.18 & 0.97 & 1.79 \\
\addlinespace[1ex]
 & \multirow{2}{*}{200} & \multicolumn{1}{c}{Proposal} & 10.68 & 0.53 & 1.2 & 18.96 & 1.28 & 0.84 & 23.15 & 0.75 & 0.92 \\
 & & \multicolumn{1}{c}{\MakeUppercase{qic}$_{\text{\MakeUppercase{w}}}$} & 13.13 & 0.77 & 1.64 & 19.50 & 1.50 & 1.02 & 27.59 & 1.50 & 1.97 \\
\addlinespace[1ex]
 \multirow{2}{*}{0.5} & \multirow{2}{*}{400} & \multicolumn{1}{c}{Proposal} & 11.55 & 0.67 & 1.24 & 20.57 & 1.58 & 0.86 & 26.47 & 1.45 & 1.40 \\
 & & \multicolumn{1}{c}{\MakeUppercase{qic}$_{\text{\MakeUppercase{w}}}$} & 13.17 & 0.88 & 1.61 & 19.44 & 1.82 & 1.00 & 27.82 & 1.81 & 1.94 \\
\addlinespace[1ex]
 & \multirow{2}{*}{600} & \multicolumn{1}{c}{Proposal} & 11.20 & 0.77 & 1.21 & 20.33 & 1.75 & 0.88 & 26.66 & 1.64 & 1.41 \\
 & & \multicolumn{1}{c}{\MakeUppercase{qic}$_{\text{\MakeUppercase{w}}}$} & 12.27 & 0.97 & 1.60 & 19.34 & 1.93 & 1.02 & 27.50 & 1.93 & 2.04 \\
\addlinespace[1ex]
 & \multirow{2}{*}{200} & \multicolumn{1}{c}{Proposal} & 15.48 & 0.75 & 1.19 & 26.22 & 1.63 & 0.81 & 37.08 & 1.54 & 1.30 \\
 & & \multicolumn{1}{c}{\MakeUppercase{qic}$_{\text{\MakeUppercase{w}}}$} & 16.98 & 0.95 & 1.77 & 25.71 & 1.93 & 1.08 & 39.01 & 1.94 & 2.17 \\
\addlinespace[1ex]
 \multirow{2}{*}{1.0} & \multirow{2}{*}{400} & \multicolumn{1}{c}{Proposal} & 14.89 & 0.90 & 1.25 & 24.21 & 1.92 & 0.82 & 32.27 & 1.86 & 1.28 \\
 & & \multicolumn{1}{c}{\MakeUppercase{qic}$_{\text{\MakeUppercase{w}}}$} & 16.66 & 1.00 & 1.76 & 25.29 & 2.00 & 1.09 & 38.57 & 2.00 & 2.15 \\
\addlinespace[1ex]
 & \multirow{2}{*}{600} & \multicolumn{1}{c}{Proposal} & 12.35 & 0.98 & 1.23 & 21.63 & 1.99 & 0.85 & 28.22 & 1.96 & 1.28 \\
 & & \multicolumn{1}{c}{\MakeUppercase{qic}$_{\text{\MakeUppercase{w}}}$} & 15.88 & 1.00 & 1.76 & 25.46 & 2.00 & 1.09 & 37.98 & 2.00 & 2.17 \\
\addlinespace[1ex]
 & \multirow{2}{*}{200} & \multicolumn{1}{c}{Proposal} & 43.16 & 0.93 & 1.29 & 75.75 & 1.92 & 0.82 & 97.94 & 1.92 & 1.35 \\
 & & \multicolumn{1}{c}{\MakeUppercase{qic}$_{\text{\MakeUppercase{w}}}$} & 45.70 & 1.00 & 1.94 & 76.60 & 2.00 & 1.16 & 134.42 & 2.00 & 2.37 \\
\addlinespace[1ex]
 \multirow{2}{*}{3.0} & \multirow{2}{*}{400} & \multicolumn{1}{c}{Proposal} & 39.81 & 0.98 & 1.41 & 61.92 & 1.99 & 0.83 & 79.53 & 2.00 & 1.34 \\
 & & \multicolumn{1}{c}{\MakeUppercase{qic}$_{\text{\MakeUppercase{w}}}$} & 45.46 & 1.00 & 0.51 & 79.06 & 2.00 & 1.16 & 132.36 & 2.00 & 2.36 \\
\addlinespace[1ex]
 & \multirow{2}{*}{600} & \multicolumn{1}{c}{Proposal} & 33.37 & 1.00 & 1.37 & 56.02 & 2.00 & 0.82 & 78.31 & 2.00 & 1.27 \\
 & & \multicolumn{1}{c}{\MakeUppercase{qic}$_{\text{\MakeUppercase{w}}}$} & 43.90 & 1.00 & 1.95 & 78.45 & 2.00 & 1.18 & 129.13 & 2.00 & 2.34 \\
\end{tabular}
\end{center}
Risk, the empirical estimate of the risk of the selected model; \MakeUppercase{tp}, the average number of covariates that are true positive; \MakeUppercase{fp}, the average number of covariates that are false positive.
\label{tabs2}
\end{table}

\section{Numerical experiments for covariate balancing difference-in-differences with optimal weighting matrix}\label{secB}

Table \ref{tabs3} lists results for data generated as in Case 2-1, Case 2-2, and Case 2-3 in Section \ref{sec4_4} and comparing the proposed criterion and the \MakeUppercase{qic}$_{\text{\MakeUppercase{w}}}$ through the forward selection method when propensity scores are estimated using the \MakeUppercase{cbd} method with the optimal matrix as the weighting matrix in the generalized method of moments.
The performance of the proposed model selection is similar to that when using the \MakeUppercase{cbd} method with the identity matrix.

\begin{table}
\renewcommand{\baselinestretch}{1.5}\selectfont
\caption{Comparison of the proposed criterion and the \MakeUppercase{qic}$_{\text{\MakeUppercase{w}}}$ through model selection for case in which the propensity scores are estimated using the \MakeUppercase{cbd} method with the optimal matrix as the weighting matrix.}
\begin{center}
\begin{tabular}{rrrrrrrrrrrr}
 & & & \multicolumn{3}{c}{Case 2-1} & \multicolumn{3}{c}{Case 2-2} & \multicolumn{3}{c}{Case 2-3} \\
 \multicolumn{1}{c}{$\beta^*$} & $n$ & & \multicolumn{1}{c}{Risk} & \multicolumn{1}{c}{\MakeUppercase{tp}} & \multicolumn{1}{c}{\MakeUppercase{fp}} & \multicolumn{1}{c}{Risk} & \multicolumn{1}{c}{\MakeUppercase{tp}} & \multicolumn{1}{c}{\MakeUppercase{fp}} & \multicolumn{1}{c}{Risk} & \multicolumn{1}{c}{\MakeUppercase{tp}} & \multicolumn{1}{c}{\MakeUppercase{fp}} \\
\addlinespace[1ex]
 & \multirow{2}{*}{200} & \multicolumn{1}{c}{Proposal} & 7.36 & 0.23 & 0.51 & 13.99 & 0.44 & 0.40 & 17.80 & 0.41 & 0.76 \\
 & & \multicolumn{1}{c}{\MakeUppercase{qic}$_{\text{\MakeUppercase{w}}}$} & 10.34 & 0.50 & 1.52 & 17.86 & 0.92 & 0.90 & 24.66 & 0.89 & 1.84 \\
\addlinespace[1ex]
 \multirow{2}{*}{0.1} & \multirow{2}{*}{400} & \multicolumn{1}{c}{Proposal} & 7.05 & 0.21 & 0.49 & 12.80 & 0.47 & 0.34 & 17.17 & 0.49 & 0.68 \\
 & & \multicolumn{1}{c}{\MakeUppercase{qic}$_{\text{\MakeUppercase{w}}}$} & 9.75 & 0.51 & 1.43 & 16.27 & 0.96 & 0.91 & 22.47 & 0.97 & 1.77 \\
\addlinespace[1ex]
 & \multirow{2}{*}{600} & \multicolumn{1}{c}{Proposal} & 7.45 & 0.23 & 0.49 & 13.65 & 0.49 & 0.37 & 16.54 & 0.50 & 0.66 \\
 & & \multicolumn{1}{c}{\MakeUppercase{qic}$_{\text{\MakeUppercase{w}}}$} & 10.30 & 0.52 & 1.49 & 16.79 & 1.01 & 0.93 & 22.08 & 1.00 & 1.81 \\
\addlinespace[1ex]
 & \multirow{2}{*}{200} & \multicolumn{1}{c}{Proposal} & 10.48 & 0.53 & 0.52 & 21.70 & 1.11 & 0.39 & 27.12 & 1.12 & 0.74 \\
 & & \multicolumn{1}{c}{\MakeUppercase{qic}$_{\text{\MakeUppercase{w}}}$} & 13.18 & 0.76 & 1.63 & 22.60 & 1.53 & 1.00 & 31.88 & 1.52 & 2.04 \\
\addlinespace[1ex]
 \multirow{2}{*}{0.5} & \multirow{2}{*}{400} & \multicolumn{1}{c}{Proposal} & 10.14 & 0.73 & 0.47 & 20.38 & 1.55 & 0.32 & 26.47 & 1.51 & 0.67 \\
 & & \multicolumn{1}{c}{\MakeUppercase{qic}$_{\text{\MakeUppercase{w}}}$} & 12.34 & 0.89 & 1.59 & 20.69 & 1.81 & 0.99 & 29.65 & 1.79 & 1.95 \\
\addlinespace[1ex]
 & \multirow{2}{*}{600} & \multicolumn{1}{c}{Proposal} & 10.57 & 0.84 & 0.51 & 20.45 & 1.76 & 0.38 & 24.65 & 1.76 & 0.66 \\
 & & \multicolumn{1}{c}{\MakeUppercase{qic}$_{\text{\MakeUppercase{w}}}$} & 12.93 & 0.95 & 1.60 & 21.27 & 1.92 & 1.01 & 28.69 & 1.92 & 2.05 \\
\addlinespace[1ex]
 & \multirow{2}{*}{200} & \multicolumn{1}{c}{Proposal} & 13.98 & 0.85 & 0.53 & 26.72 & 1.78 & 0.37 & 35.39 & 1.79 & 0.73 \\
 & & \multicolumn{1}{c}{\MakeUppercase{qic}$_{\text{\MakeUppercase{w}}}$} & 17.78 & 0.94 & 1.77 & 28.94 & 1.92 & 1.10 & 42.80 & 1.94 & 2.18 \\
\addlinespace[1ex]
 \multirow{2}{*}{1.0} & \multirow{2}{*}{400} & \multicolumn{1}{c}{Proposal} & 11.73 & 0.98 & 0.49 & 22.13 & 1.98 & 0.33 & 30.84 & 1.98 & 0.65 \\
 & & \multicolumn{1}{c}{\MakeUppercase{qic}$_{\text{\MakeUppercase{w}}}$} & 16.01 & 1.00 & 1.74 & 26.60 & 2.00 & 1.11 & 40.41 & 1.99 & 2.15 \\
\addlinespace[1ex]
 & \multirow{2}{*}{600} & \multicolumn{1}{c}{Proposal} & 12.21 & 1.00 & 0.49 & 22.92 & 2.00 & 0.37 & 28.63 & 2.00 & 0.63 \\
 & & \multicolumn{1}{c}{\MakeUppercase{qic}$_{\text{\MakeUppercase{w}}}$} & 16.81 & 1.00 & 1.72 & 27.77 & 2.00 & 1.11 & 39.30 & 2.00 & 2.19 \\
\addlinespace[1ex]
 & \multirow{2}{*}{200} & \multicolumn{1}{c}{Proposal} & 35.73 & 0.99 & 0.52 & 63.10 & 2.00 & 0.34 & 92.37 & 2.00 & 0.64 \\
 & & \multicolumn{1}{c}{\MakeUppercase{qic}$_{\text{\MakeUppercase{w}}}$} & 49.67 & 1.00 & 1.96 & 83.16 & 2.00 & 1.21 & 134.42 & 2.00 & 2.37 \\
\addlinespace[1ex]
 \multirow{2}{*}{3.0} & \multirow{2}{*}{400} & \multicolumn{1}{c}{Proposal} & 29.78 & 1.00 & 0.49 & 59.41 & 2.00 & 0.32 & 91.98 & 2.00 & 2.39 \\
 & & \multicolumn{1}{c}{\MakeUppercase{qic}$_{\text{\MakeUppercase{w}}}$} & 43.98 & 1.00 & 1.93 & 79.86 & 2.00 & 1.16 & 134.37 & 2.00 & 2.39 \\
\addlinespace[1ex]
 & \multirow{2}{*}{600} & \multicolumn{1}{c}{Proposal} & 30.75 & 1.00 & 0.47 & 61.53 & 2.00 & 0.34 & 88.53 & 2.00 & 0.65 \\
 & & \multicolumn{1}{c}{\MakeUppercase{qic}$_{\text{\MakeUppercase{w}}}$} & 45.42 & 1.00 & 1.95 & 83.30 & 2.00 & 1.20 & 129.84 & 2.00 & 2.38 \\
\end{tabular}
\end{center}
Risk, the empirical estimate of the risk of the selected model; \MakeUppercase{tp}, the average number of covariates that are true positive; \MakeUppercase{fp}, the average number of covariates that are false positive.
\label{tabs3}
\end{table}

\section*{Acknowledgement}
Yoshiyuki Ninomiya gratefully acknowledges support from JSPS KAKENHI (grant numbers 23H00809 and 23K18471). 
The authors declare no conflicts of interest associated with this manuscript. 
The LaLonde dataset is available in the R package Matching version 4.10.14 (\citealt{sek11multi}). 
Example R scripts used for the simulations and real data analyses are available from the authors.

\bibliography{List}

\end{document}